\documentclass[11pt,a4paper]{article}
\usepackage[english]{babel} 
\usepackage{amsmath}
\usepackage{mathtools}
\usepackage{amsfonts}
\usepackage{amssymb}
\usepackage{amsthm}
\usepackage{makeidx}
\usepackage{graphicx}
\usepackage{natbib}
\usepackage{setspace}
\usepackage{float}
\usepackage{hyperref}
\usepackage[margin=1in]{geometry}
\usepackage{setspace}
\usepackage[titletoc]{appendix}
\onehalfspacing
\hypersetup{
	colorlinks   = true,
	citecolor    = blue,
	urlcolor = blue
}

\author{Marcelo Brutti Righi\footnote{We would like to thank Professors Ruodu Wang and Martin Schweizer for their comments, which have helped to improve the	manuscript. We are grateful for the financial support of FAPERGS (Rio Grande do Sul State Research Council) project number 17/2551-0000862-6 and CNPq (Brazilian Research Council) projects number 302369/2018-0 and 407556/2018-4.}
	\\
	\textit{Federal University of Rio Grande do Sul}\\
	marcelo.righi@ufrgs.br}

\title{A theory for combinations of risk measures}
\date{}

\newtheorem{Def}{Definition}[section]
\newtheorem{Thm}[Def]{Theorem}
\newtheorem{Prp}[Def]{Proposition}
\newtheorem{Lmm}[Def]{Lemma}
\newtheorem{Crl}[Def]{Corollary}
\theoremstyle{definition}
\newtheorem{Exm}[Def]{Example}
\newtheorem{Asp}[Def]{Assumption}
\theoremstyle{remark}
\newtheorem{Rmk}[Def]{Remark}

\numberwithin{equation}{section}

\DeclareMathOperator*{\argmax}{arg\,max}

\begin{document}
	
\maketitle
\begin{abstract}
	We study combinations of risk measures under no restrictive assumption on the set of alternatives. We develop and discuss results regarding the preservation of properties and acceptance sets for the combinations of risk measures. One of the main results is the representation of resulting risk measures from the properties of both alternative functionals and combination functions. We build on developing a dual representation for an arbitrary mixture of convex risk measures. In this case, we obtain a penalty that recalls the notion of inf-convolution under theoretical measure integration. We develop results related to this specific context. We also explore features of individual interest generated by our frameworks, such as the preservation of continuity properties and the representation of worst-case risk measures.
\end{abstract}	
\smallskip
\noindent \textbf{Keywords}:  risk measures, combination, acceptance set, dual representation, continuity.

\section{Introduction}
The theory of risk measures in mathematical finance has become mainstream, especially since the landmark paper of \cite{Artzner1999}. For a comprehensive review, see the books of \cite{Pflug2007}, \cite{Delbaen2012}, \cite{Ruschendorf2013} and \cite{Follmer2016}. Nonetheless, there is still no consensus about the best theoretical properties to possess, and even less regarding the best risk measure. See \cite{Emmer2015} for a comparison of risk measures. This phenomenon motivates the proposition of new approaches, such as in \cite{Righi2016} and \cite{Righi2020}, for instance. Under the lack of a versatile choice of the best risk measure from a set of alternatives, one can consider the common use of many candidates to benefit from distinct qualities. 

However, such a choice can lead to multidimensional or even infinite dimensional problems that
bring complexity that may make the treatment of risk measurement impossible to handle. For example, in a portfolio optimization problem, to take into account this variety of features from different risk measures, the agent may end up with a very complex multi-objective function or even an exorbitant volume of constraints. Such a situation would lead to elevated computational costs or even the impossibility of a feasible solution. Hence, the alternative is to consider a combination of all the candidates instead of all them individually. In the context of portfolio optimization, we would have a single constraint or objective with a larger feasibility space.

The drawback of considering such a combination is that we may end up without the main characteristics that define risk measures. More precisely, the axiomatic theory of risk measures strongly relies on a set of financial and mathematical properties related to dual representations and acceptance sets. Thus, understanding how to preserve such properties in a general fashion is crucial to guarantee the usefulness of combinations. In this sense, developing a theoretical body for combinations of risk measures is pivotal. A general approach must deal with arbitrary sets of candidates, even uncountable ones. Such a situation may arise when the parameter that defines the candidates relies on a subset of the real line, such as the significance level for Value at Risk or some probability of default in credit risk. 

The main challenge relies on the generality needed to perform this kind of task since we cannot rely on methods for finite-dimensional spaces that appear in the literature. For instance, when dealing with some uncountable set of candidates, we cannot even consider the usual summation, which is crucial for averaging, having to replace it with integration. However, in this case, we have measurability issues to take into account that may become complex. Also, set operations may not preserve topological properties, such as the uncountable union of closed sets does not have to be closed. Even the choice of suitable domains for combination functions in infinite dimensional spaces can be, per se, a source of complexity since there is no canonical functional space. 

Another source of difficulty and complexity is that combinations may assume any functional form, such as averaging or supremum-based worst cases. Any function applied over the set of candidate risk measures can be considered. Thus, studying the impact of such combinations in properties, acceptance sets, and dual representations, among other features, is not straightforward. Since specific combination functions can be more suitable for distinct contexts, having a general treatment is very beneficial. In this sense, a theory that provides a reliable and practical manner to preserve desired properties and obtain acceptance sets or dual representations for a general combination of risk measures can help improve other fields in mathematical finance.

Under this background, in this paper we study risk measures of the form $\rho=f(\rho_{\mathcal{I}})$, where $\rho_{\mathcal{I}}=\{\rho^i,\:i\in\mathcal{I}\}$ is a set of alternative risk measures and $f$ is some combination function. We propose a framework whereby no assumption is made on the index set $\mathcal{I}$, apart from non-emptiness. Typically, this procedure uses a finite set of candidates, leading the domain of $f$ to be some Euclidean space. In our case, the domain of $f$ is taken by a subset of the random variables over a suitable measurable space created on $\mathcal{I}$. From that, our main goal is to establish general results on properties, develop dual representations, and study acceptance sets for such composed risk measures based on the properties of both $\rho_{\mathcal{I}}$ and $f$ in a general sense. For this purpose, we expose results for some featured special cases, which are also of particular interest, such as a worst case and mixtures of risk measures.

There are studies regarding particular cases for $f$, $\mathcal{I}$ and $\rho_{\mathcal{I}}$, such as the worst case in \cite{Follmer2002}, the sum of monetary and deviation measures in \cite{Righi2018a}, finite convex combinations in \cite{Ang2018}, scenario-based aggregation in \cite{Wang2018}, model risk-based weighting over a non-additive measure in \cite{Jokhadze2018}. Our main contribution is that we do not restrict the set of alternative risk measures and the generality of combination functions we consider. Furthermore, none of such papers considers all the features we take into account. It is worthy of mentioning that the well-known concept of inf-convolution of risk measures as in \cite{Barrieu2005} or \cite{Jouini2008} is not suitable
for the approach in this paper since even in
the most simple case of two risk measures, and we cannot write it as a direct
combination as $f(\rho_1(X),\rho_2(X)) = \inf_Y \{\rho_1(X-Y)+\rho_2(X)\}$ since it is not only a function of $X$. More precisely, the inf-convolution
depends of every allocation $X_1 + X_2 = X$. The work of \cite{Righi2020b} is focused on inf-convolution and optimal risk sharing for arbitrary sets of risk
measures.

We have structured the rest of this paper as follows: in Section \ref{sec:back} we expose preliminaries regarding notation, a brief background on the theory of risk measures in order to support our framework and our proposed approach with some examples; in Section \ref{sec:prop} we present results regarding properties of combination functions and how they affect the resulting risk measures in both financial and continuity properties; in Section \ref{sec:main} we develop and prove our results on representations of resulting risk measures in terms of properties from both the set of candidates and the combination for the general convex and law invariant cases, as well we address a representation for the worst-case risk measure; in Section \ref{sec:accept} we present results on how properties of combination functions affect the resulting risk measures acceptance set and provide a general characterization for the convex and coherent cases.

	\section{Preliminaries}\label{sec:back}
	
	\subsection{Notation}
	
	Consider the probability space $(\Omega,\mathcal{F},\mathbb{P})$. All equalities and inequalities are in the $\mathbb{P}$-a.s. sense. We have that $L^0=L^0(\Omega,\mathcal{F},\mathbb{P})$ and $L^{\infty}=L^{\infty}(\Omega,\mathcal{F},\mathbb{P})$ are, respectively, the spaces of (equivalent classes under $\mathbb{P}$-a.s. equality of) finite and essentially bounded random variables. We define $1_A$ as the indicator function for an event $A\in\mathcal{F}$. We identify constant random variables with real numbers. We say that a pair $X,Y\in L^0$ is comonotone if $\left(X(w)-X(w^{\prime})\right)\left( Y(w)-Y(w^{\prime}) \right)\geq0$ holds $\mathbb{P}\times\mathbb{P}$-a.s. We denote by $X_n\rightarrow X$ convergence in the $L^\infty$ essential supremum norm $\lVert \cdot\rVert_{\infty}$, while $\lim\limits_{n\rightarrow\infty}X_n=X$ means $\mathbb{P}$-a.s. convergence. Let $\mathcal{P}$ be the set of all probability measures on $(\Omega,\mathcal{F})$. We denote $E_{\mathbb{Q}}[X]=\int_{\Omega}Xd\mathbb{Q}$, $F_{X, \mathbb{Q}}(x)=\mathbb{Q}(X\leq x)$ and $F_{X, \mathbb{Q}}^{-1}(\alpha)=\inf\left\lbrace x:F_{X, \mathbb{Q}}(x)\geq\alpha\right\rbrace $, respectively, the expected value, the (non-decreasing and right-continuous) probability function and its inverse for $X$ under $\mathbb{Q}\in\mathcal{P}$. We write $X\overset{\mathbb{Q}}\sim Y$ when $F_{X,\mathbb{Q}}=F_{Y,\mathbb{Q}}$. We drop subscripts regarding probability measures when $\mathbb{Q}=\mathbb{P}$. Furthermore, let $\mathcal{Q}\subseteq\mathcal{P}$ be the set of probability measures that are continuous concerning $\mathbb{P}$ with Radon-Nikodym derivatives $\frac{d\mathbb{Q}}{d\mathbb{P}}$. 
	
	\subsection{Background}
	
	We begin with the definition of risk measures and the properties they may or not fulfill. We focus here on the most used properties in the literature. For a detailed interpretation, we refer to the books cited above.
	
	\begin{Def}\label{def:risk}
		A functional $\rho:L^\infty\rightarrow\mathbb{R}$ is a risk measure. Its acceptance set is defined as $\mathcal{A}_\rho=\left\lbrace X\in L^\infty:\rho(X)\leq 0 \right\rbrace $. $\rho$ may possess the following properties: 
		
		\begin{enumerate}
			\item Monotonicity [M]: if $X \leq Y$, then $\rho(X) \geq \rho(Y),\:\forall\: X,Y\in L^\infty$.
			\item Translation Invariance [TI]: $\rho(X+C)=\rho(X)-C,\:\forall\: X,Y\in L^\infty,\:\forall\:C \in \mathbb{R}$.
			\item Convexity [C]: $\rho(\lambda X+(1-\lambda)Y)\leq \lambda \rho(X)+(1-\lambda)\rho(Y),\:\forall\: X,Y\in L^\infty,\:\forall\:\lambda\in[0,1]$.
			\item Positive Homogeneity [PH]: $\rho(\lambda X)=\lambda \rho(X),\:\forall\: X,Y\in L^\infty,\:\forall\:\lambda \geq 0$.
			\item Law Invariance [LI]: if $F_X=F_Y$, then $\rho(X)=\rho(Y),\:\forall\:X,Y\in L^\infty$.
			\item Comonotonic Additivity [CA]: $\rho(X+Y)= \rho(X)+\rho(Y),\:\forall\: X,Y\in L^\infty$ with $X,Y$ comonotone.
			\item Fatou continuity [FC]:  if $\lim\limits_{n\rightarrow\infty}X_n=X\in L^\infty$ and $\{X_n\}_{n=1}^\infty\subseteq L^\infty$ bounded, then $\rho(X) \leq \liminf\limits_{n\rightarrow\infty} \rho( X_{n})$.
		\end{enumerate}
		
		We have that  $\rho$ is called monetary if it fulfills [M] and [TI], convex if it is monetary and respects [C], coherent if it is convex and fulfills [PH], law invariant if it has [LI], comonotone if it attends [CA], and Fatou continuous if it possesses [FC]. In this paper, we are working with normalized risk measures in the sense of $\rho(0)=0$. 
	\end{Def}
	
	Beyond usual norm and Fatou-based continuities, (a.s.) point-wise are relevant for risk measures since they play a role in representations. Since their preservation is also important, we now define the most used ones in the literature.
	
	\begin{Def}\label{def:cont}
		A risk measure $\rho:L^\infty\rightarrow\mathbb{R}$ is said to be:
		\begin{enumerate}
			\item Continuous from above: $\lim\limits_{n\rightarrow\infty}X_n=X$ and $\{X_n\}$ non-increasing implies in $\rho(X)= \lim\limits_{n\rightarrow\infty} \rho( X_{n})$, $\:\forall\:\{X_n\}_{n=1}^\infty,X\in L^\infty$.
			\item Continuous from below: $\lim\limits_{n\rightarrow\infty}X_n=X$ and $\{X_n\}$ non-decreasing  implies in $\rho(X)= \lim\limits_{n\rightarrow\infty} \rho( X_{n})$, $\:\forall\:\{X_n\}_{n=1}^\infty,X\in L^\infty$.
			\item Lebesgue continuous: $\lim\limits_{n\rightarrow\infty}X_n=X$ implies in $\rho(X)= \lim\limits_{n\rightarrow\infty} \rho( X_{n})$ $\:\forall\:\{X_n\}_{n=1}^\infty\subseteq L^\infty$ bounded and $\forall\:X\in L^\infty$.
		\end{enumerate}
	\end{Def}
	
	For convex risk measures, [FC] is equivalent to continuity from above, while [FC] is implied by continuity from below. Thus, for convex risk measures, continuity from below is equivalent to Lebesgue continuity. We recommend the books mentioned in the classic theory to review details regarding interpreting such properties. 
	
	The main core of the risk measures theory is the triplet composed of properties, acceptance sets, and dual representations. Since this is the main focus of this paper, we need the following results for direct implications regarding the interactions of these three structures. In fact, 	under [C] and [FC], the well-known convex duality plays an important role. 
	
	\begin{Thm}[Proposition 4.6 in \cite{Follmer2016}]\label{Thm:accept}
		Let $\mathcal{A}_\rho$ be the acceptance set defined by $\rho\colon L^\infty\rightarrow\mathbb{R}$. Then:
		\begin{enumerate}
			\item If $\rho$ fulfills [M], then $X\in\mathcal{A}_\rho$, $Y\in L^\infty$ and $Y\geq X$ implies in $Y\in\mathcal{A}_\rho$. In particular, $L^\infty_+\subseteq\mathcal{A}_\rho$.
			\item If $\rho$ fulfills [TI], then $\rho(X)=\inf\left\lbrace m\in\mathbb{R}\colon X+m\in\mathcal{A}_\rho\right\rbrace$.
			\item If $\rho$ is a monetary risk measure, then $A_\rho$ is non-empty, closed with respect to the supremum norm, $\mathcal{A}_\rho\cap\{X\in L^\infty\colon X<0\}=\emptyset$, and $\inf\{m\in\mathbb{R}\colon m\in\mathcal{A}_\rho\}>-\infty$.
			\item If $\rho$ attends [C], then $\mathcal{A}_\rho$ is a convex set.
			\item If $\rho$ fulfills [PH], then $\mathcal{A}_\rho$ is a cone.
		\end{enumerate}
	\end{Thm}

	\begin{Thm}[Theorem 2.3 of \cite{Delbaen2002}, Theorem 4.33 of \cite{Follmer2016}]\label{the:dual}
		Let $\rho : L^\infty\rightarrow \mathbb{R}$ be a risk measure. Then:
		\begin{enumerate}
			\item $\rho$ is a Fatou continuous convex risk measure if and only if it can be represented as:
			\begin{equation}\label{eq:dual}
			\rho(X)=\sup\limits_{\mathbb{Q}\in\mathcal{Q}}\left\lbrace E_\mathbb{Q}[-X]-\alpha^{min}_\rho(\mathbb{Q}) \right\rbrace,\:\forall\:X\in L^\infty,
			\end{equation}
			where  $\alpha^{min}_\rho : \mathcal{Q}\rightarrow\mathbb{R}_+\cup\{\infty\}$, defined as $\alpha^{min}_\rho(\mathbb{Q})=\sup\limits_{X\in\mathcal{A}_\rho}E_\mathbb{Q}[-X]$, is a lower semi-continuous convex function called penalty term. This is equivalent to $\mathcal{A}_\rho$ be weak* closed, i.e. $\sigma(L^\infty,L^1)$ closed.
			\item 	$\rho$ is a Fatou continuous coherent risk measure if and only if it can be represented as:
			\begin{equation}\label{eq:cohdual}
			\rho(X)=\sup\limits_{\mathbb{Q}\in\mathcal{Q}_\rho} E_\mathbb{Q}[-X],\:\forall\:X\in L^\infty,
			\end{equation} where $\mathcal{Q}_\rho\subseteq\mathcal{Q}$ is non-empty, closed and convex  called the dual set of $\rho$.
		\end{enumerate}
	\end{Thm}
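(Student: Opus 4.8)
The plan is to prove both equivalences by convex duality, treating the ``if'' directions as routine verification and concentrating effort on the ``only if'' direction of part (i). For the ``if'' direction of (i), I would check that $\rho(X)=\sup_{\mathbb{Q}\in\mathcal{Q}}\{E_\mathbb{Q}[-X]-\alpha(\mathbb{Q})\}$ inherits each defining property from the affine maps $X\mapsto E_\mathbb{Q}[-X]-\alpha(\mathbb{Q})$: Monotonicity and Translation Invariance hold termwise because every $\mathbb{Q}\in\mathcal{Q}$ is a probability measure, Convexity is preserved under pointwise suprema, and Fatou continuity follows since each term is weak*-continuous and a supremum of such functionals is $\sigma(L^\infty,L^1)$-lower semicontinuous. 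Normalization $\rho(0)=0$ forces $\inf_\mathbb{Q}\alpha(\mathbb{Q})=0$, while $0\in\mathcal{A}_\rho$ gives $\alpha^{min}_\rho\geq0$, matching the stated codomain.

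For the ``only if'' direction I would first reconstruct $\rho$ from its acceptance set via Translation Invariance, writing $\rho(X)=\inf\{m\in\mathbb{R}:X+m\in\mathcal{A}_\rho\}$, and note that $\mathcal{A}_\rho$ is convex (by Convexity) and weak*-closed, where Fatou continuity is precisely what converts a.s.\ bounded convergence into the needed closedness. The core step is the Fenchel--Moreau biconjugation theorem: as a proper convex function on $L^\infty$ that is lower semicontinuous for $\sigma(L^\infty,L^1)$, $\rho$ equals its biconjugate, so $\rho(X)=\sup_{Z\in L^1}\{\langle Z,X\rangle-\rho^*(Z)\}$ with $\rho^*(Z)=\sup_{X}\{\langle Z,X\rangle-\rho(X)\}$.

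The decisive work lies in identifying which dual elements $Z\in L^1$ contribute. Monotonicity forces $\rho^*(Z)=+\infty$ unless $Z\leq0$, and Translation Invariance (testing on $X+C$ and letting $C$ vary) forces $E[-Z]=1$; writing $\frac{d\mathbb{Q}}{d\mathbb{P}}=-Z$ identifies the effective dual variables with densities of measures $\mathbb{Q}\in\mathcal{Q}$. Substituting $\langle Z,X\rangle=E_\mathbb{Q}[-X]$ and setting $\alpha^{min}_\rho(\mathbb{Q})=\rho^*\!\left(-\frac{d\mathbb{Q}}{d\mathbb{P}}\right)$ yields (\ref{eq:dual}); a short computation identifies this conjugate with $\sup_{X\in\mathcal{A}_\rho}E_\mathbb{Q}[-X]$, and convexity together with lower semicontinuity of $\alpha^{min}_\rho$ follow from its being a supremum of affine weak*-continuous maps. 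The main obstacle is the careful handling of the weak* topology, namely verifying that Fatou continuity is exactly equivalent to $\sigma(L^\infty,L^1)$-lower semicontinuity of $\rho$ (equivalently, weak*-closedness of $\mathcal{A}_\rho$), since a.s.\ bounded convergence does not directly give weak* convergence; here one typically invokes the Krein--Smulian theorem to reduce closedness to the bounded sets on which the two notions coincide.

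Finally, part (ii) is obtained as a specialization. Under Positive Homogeneity the acceptance set $\mathcal{A}_\rho$ is a convex cone, so its support function $\alpha^{min}_\rho$ takes only the values $0$ and $+\infty$. Setting $\mathcal{Q}_\rho=\{\mathbb{Q}\in\mathcal{Q}:\alpha^{min}_\rho(\mathbb{Q})=0\}$ collapses the penalized supremum in (\ref{eq:dual}) into the unpenalized supremum (\ref{eq:cohdual}), and the non-emptiness, convexity, and closedness of $\mathcal{Q}_\rho$ are inherited from the corresponding properties of $\alpha^{min}_\rho$ established in part (i).
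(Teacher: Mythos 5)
This theorem appears in the paper purely as quoted background (Theorem 2.3 of \cite{Delbaen2002}, Theorem 4.33 of \cite{Follmer2016}); the paper gives no proof of its own, so the only meaningful comparison is with the argument in those references. Your proposal correctly reconstructs exactly that standard argument: termwise verification for the ``if'' direction, Fenchel--Moreau biconjugation of $\rho$ with respect to $\sigma(L^\infty,L^1)$, elimination of dual variables $Z$ with $Z\not\leq 0$ or $E[-Z]\neq 1$ via Monotonicity and Translation Invariance, identification of the conjugate with the acceptance-set penalty $\sup_{X\in\mathcal{A}_\rho}E_\mathbb{Q}[-X]$, and the cone/support-function collapse giving the coherent case, with non-emptiness, convexity and closedness of $\mathcal{Q}_\rho$ read off from $\alpha^{min}_\rho$. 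One imprecision worth fixing: your stated reason that the Fatou/weak*-lower-semicontinuity equivalence is delicate is backwards. Bounded a.s.\ convergence \emph{does} imply weak* convergence (dominated convergence against any $Z\in L^1$), and that is precisely why weak*-lsc implies Fatou continuity; the genuine difficulty is the converse, namely that lower semicontinuity tested only along bounded a.s.-convergent sequences must be upgraded to lower semicontinuity along arbitrary weak*-convergent nets. This is where Krein--Smulian enters: one shows $\mathcal{A}_\rho\cap\{\lVert X\rVert_\infty\leq r\}$ is convex and closed in probability (using Fatou continuity and a subsequence extraction), hence weak*-closed, and Krein--Smulian then yields weak*-closedness of $\mathcal{A}_\rho$ itself. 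With that repair, your sketch is the proof in the cited sources.
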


	\begin{Exm}\label{Exm:meas}
		Examples of risk measures:
		\begin{enumerate}
			\item Expected Loss (EL): This is a Fatou continuous law invariant comonotone coherent risk measure defined as $EL(X)=-E[X]=-\int_0^1 F^{-1}_X(s)ds$. We have that $\mathcal{A}_{EL}=\left\lbrace X\in L^\infty:E[X]\geq0\right\rbrace $ and $\mathcal{Q}_{EL}=\{\mathbb{P}\}$. 
			\item Value at Risk (VaR): This is a Fatou continuous law invariant comonotone monetary risk measure defined as $VaR^\alpha(X)=-F_{X}^{-1}(\alpha),\:\alpha\in[0,1]$. We have  that $\mathcal{A}_{VaR^\alpha}=\left\lbrace X\in L^\infty:\mathbb{P}(X<0)\leq\alpha\right\rbrace $.
			\item Expected Shortfall (ES): This is a Fatou continuous law invariant comonotone coherent risk measure  defined as $ES^{\alpha}(X)=\frac{1}{\alpha}\int_0^\alpha VaR^s(X)ds,\:\alpha\in(0,1]$ and $ES^0(X)=VaR^0(X)=-\operatorname{ess}\inf X$. We have $\mathcal{A}_{ES^\alpha}=\left\lbrace X\in L^\infty:\int_0^\alpha VaR^s(X)ds\leq0\right\rbrace $ and  $\mathcal{Q}_{ES^\alpha}=\left\lbrace \mathbb{Q}\in\mathcal{Q} : \frac{d\mathbb{Q}}{d\mathbb{P}}\leq\frac{1}{\alpha} \right\rbrace$.
			\item Maximum loss (ML): This is a Fatou continuous law invariant  coherent risk measure defined as $ML(X)=-\text{ess}\inf X=F^{-1}_X(0)$. We have $\mathcal{A}_{ML}=\left\lbrace X\in L^\infty:X\geq0\right\rbrace$ and $\mathcal{Q}_{ML}=\mathcal{Q}$.
		\end{enumerate}
	\end{Exm}

	Interesting features are present when there is [LI], which is the case in most practical applications. In this situation, we always assume that the probability space is atom-less.
	
	\begin{Thm}[Theorem 2.1 of \cite{Jouini2006} and Proposition 1.1 of \cite{Svindland2010}]\label{thm:fatou}
		Let $\rho:L^\infty\rightarrow\mathbb{R}$ be a law invariant convex risk measure. Then $\rho$ is Fatou continuous.
	\end{Thm}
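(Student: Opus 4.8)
The plan is to read Fatou continuity, for the convex $\rho$, as lower semicontinuity with respect to the weak-$*$ topology $\sigma(L^\infty,L^1)$, equivalently as weak-$*$ closedness of the acceptance set $\mathcal{A}_\rho=\{X\in L^\infty:\rho(X)\le 0\}$; this is the equivalence underlying Theorem~\ref{the:dual}, so establishing Fatou continuity amounts to upgrading a norm-closedness statement to a weak-$*$ one, and the sole hypothesis available for the upgrade is Law Invariance. I would first dispose of the norm side: since $\rho$ is convex, monotone and everywhere finite on the Banach lattice $L^\infty$, the extended Namioka--Klee theorem forces $\rho$ to be norm-continuous (indeed locally Lipschitz), so that $\mathcal{A}_\rho$ is convex and norm-closed and $\rho$ is norm-lower-semicontinuous. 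What remains is the passage to the coarser topology $\sigma(L^\infty,L^1)$.

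Next I would localize the closedness test. By the Krein--Smulian theorem a convex subset of $L^\infty$ is $\sigma(L^\infty,L^1)$-closed as soon as its intersection with each ball $\{X:\lVert X\rVert_\infty\le r\}$ is; and after the standard reduction to $\mathcal{F}$ countably generated, these balls are weak-$*$ metrizable, so closedness may be checked along sequences. The problem thus becomes: given $\{X_n\}\subset\mathcal{A}_\rho$ with $\lVert X_n\rVert_\infty\le r$ and $X_n\to X$ in $\sigma(L^\infty,L^1)$, show $\rho(X)\le 0$. Since the underlying measure is finite, weak-$*$ convergence of this bounded sequence entails its weak convergence in $L^1$, whereupon Mazur's lemma furnishes convex combinations converging in $L^1$, hence $\mathbb{P}$-a.s.\ along a subsequence; Convexity keeps these combinations inside $\mathcal{A}_\rho$. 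In this way I reduce everything to showing that $\mathcal{A}_\rho$ is closed under bounded $\mathbb{P}$-a.s.\ limits, which is the genuine content.

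The crux, and the step I expect to be the main obstacle, is precisely this closure, and it is where Law Invariance must do all the work. The device I would use is a Jensen-type inequality valid for any law invariant convex risk measure on an atom-less space, namely $\rho(E[Y\mid\mathcal{G}])\le\rho(Y)$ for every sub-$\sigma$-algebra $\mathcal{G}$: for $\mathcal{G}$ generated by a finite partition, $E[Y\mid\mathcal{G}]$ lies in the convex hull of random variables equidistributed with $Y$ (the Hardy--Littlewood--P\'olya / convex-order description of mean-preserving contractions, which is exactly where atom-lessness is spent), so that Convexity together with Law Invariance yields the bound without presupposing Fatou continuity. Choosing finite $\sigma$-algebras $\mathcal{F}_m\uparrow\mathcal{F}$, this inequality pushes each accepted approximant down to its conditional expectation onto $L^\infty(\mathcal{F}_m)$, a finite-dimensional space on which $\rho$ is automatically continuous, giving $E[X\mid\mathcal{F}_m]\in\mathcal{A}_\rho$ for every $m$. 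The delicate remaining point is to recover $\rho(X)$ from these finite-dimensional marginals as $E[X\mid\mathcal{F}_m]\to X$ boundedly $\mathbb{P}$-a.s., since the naive limit is itself the Fatou property one is trying to prove and the convex-order inequality here runs in the unhelpful direction. This is exactly where the deep rearrangement input of Jouini--Schachermayer--Touzi and Svindland is indispensable, closing the limit through the order structure supplied by Law Invariance (for instance via the equivalence of the Fatou property with continuity from above); everything surrounding it is soft functional analysis.
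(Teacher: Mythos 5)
You should first note that the paper itself does not prove this statement: it is imported verbatim from Theorem 2.1 of Jouini--Schachermayer--Touzi (2006) and Proposition 1.1 of Svindland (2010), so your attempt must be judged as a self-contained reconstruction of those proofs. As such it has a genuine gap, which you yourself concede in your last sentences: after correctly reducing the problem (via Lipschitz continuity of monetary convex maps, Krein--Smulian, and Mazur) to closedness of $\mathcal{A}_\rho$ under bounded a.s.\ limits, and after pushing accepted positions down to conditional expectations over finite $\sigma$-algebras, you declare the final passage to the limit to be ``exactly where the deep rearrangement input of Jouini--Schachermayer--Touzi and Svindland is indispensable,'' invoking in particular the equivalence of the Fatou property with continuity from above. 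That is the statement being proved (in equivalent form): as written, the argument reduces the theorem to itself rather than proving it.

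Two remarks on where the difficulty actually sits. First, the step you defer is in fact the easy one: nothing forces you to use a generic filtration $\mathcal{F}_m\uparrow\mathcal{F}$, for which only a.s.\ convergence of $E[X\mid\mathcal{F}_m]$ is available. Take instead $\mathcal{G}_m=\sigma\bigl(2^{-m}\lfloor 2^m X\rfloor\bigr)$, which is finite because $X$ is bounded; since $2^{-m}\lfloor 2^m X\rfloor$ is $\mathcal{G}_m$-measurable, one gets $\lVert X-E[X\mid\mathcal{G}_m]\rVert_\infty\le 2^{-m+1}$, so the conditional expectations converge to $X$ in the $L^\infty$ norm and the norm-closedness of $\mathcal{A}_\rho$, which you already established, finishes the proof with no Fatou-type limit theorem at all. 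Second, the step you treat as routine is where the real work lies: the inequality $\rho(E[Y\mid\mathcal{G}])\le\rho(Y)$ does not follow from ``$E[Y\mid\mathcal{G}]$ lies in the convex hull of random variables equidistributed with $Y$,'' because in general $E[Y\mid\mathcal{G}]$ is not a finite convex combination of rearrangements of $Y$; it lies only in a \emph{closed} convex hull, and the topology of that closure is precisely the issue. If the closure is taken in $L^1$ or weak-$*$, using it presupposes lower semicontinuity of $\rho$ in that topology, i.e.\ the Fatou property again. To avoid circularity one must show that for finite $\mathcal{G}$ on an atom-less space $E[Y\mid\mathcal{G}]$ is a $\lVert\cdot\rVert_\infty$-limit of convex combinations of rearrangements of $Y$ (e.g., Ces\`aro averages over cyclic measure-preserving permutations, within each atom of $\mathcal{G}$, of an equal-measure partition refining the level sets of a uniform simple approximation of $Y$), and then invoke only norm continuity of $\rho$. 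This rearrangement construction --- not the final limit --- is the substantive content of the cited papers, and it is missing from your argument.
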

	
	\begin{Thm}[Theorems 4 and 7 of \cite{Kusuoka2001}, Theorem 4.1 of \cite{Acerbi2002a}, Theorem 7 of \cite{Fritelli2005}]\label{thm:dual2}
		Let $\rho : L^\infty\rightarrow \mathbb{R}$ be a risk measure. Then:
		\begin{enumerate}
			\item $\rho$ is a law invariant convex risk measure if and only if it can be represented as:
			\begin{equation}\label{eq:dual2}
			\rho(X)=\sup\limits_{m\in\mathcal{M}}\left\lbrace \int_{(0,1]}ES^\alpha(X)dm-\beta^{min}_{\rho}(m)\right\rbrace,\:\forall\:X\in L^\infty,
			\end{equation}
			where $\mathcal{M}$ is the set of probability measures on $(0,1]$ and $\beta^{min}_{\rho} : \mathcal{M}\rightarrow\mathbb{R}_+\cup\{\infty\}$, defined as
			$\beta^{min}_{\rho}(m)=\sup\limits_{X\in\mathcal{A}_{\rho}} \int_{(0,1]}ES^\alpha(X)dm$. 
			\item $\rho$ is a law invariant coherent risk measure if and only if it can be represented as:
			\begin{equation}\label{eq:dual2coh}
			\rho(X)=\sup\limits_{m\in\mathcal{M}_\rho} \int_{(0,1]}ES^\alpha(X)dm,\:\forall\:X\in L^\infty,
			\end{equation}
			where  $\mathcal{M}_{\rho}=\left\lbrace m\in\mathcal{M}\colon\int_{(u,1]}\frac{1}{v}dm=F^{-1}_{\frac{d\mathbb{Q}}{d\mathbb{P}}}(1-u),\:\mathbb{Q}\in\mathcal{Q}_\rho\right\rbrace $.
			\item $\rho$ is a law invariant comonotone coherent risk measure if and only if it can be represented as:
			\begin{equation}\label{eq:dual2com}
			\rho(X)= \int_{(0,1]}ES^\alpha(X)dm,\:\forall\:X\in L^\infty,
			\end{equation}
			where $m\in\mathcal{M}_\rho$.  
		\end{enumerate}
	\end{Thm}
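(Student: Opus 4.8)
The plan is to bootstrap from the general dual representations already established (Theorem~\ref{the:dual}), using law invariance to collapse the abstract dual variable $\mathbb{Q}\in\mathcal{Q}$ into a one-dimensional mixing measure $m$ on $(0,1]$. I would prove the three parts in reverse order — the co-monotone coherent case (iii), then the coherent case (ii), then the convex case (i) — since each furnishes the building block for the next. Throughout, the atom-less hypothesis on $(\Omega,\mathcal{F},\mathbb{P})$ is the feature that makes the reduction possible.

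For part (iii), a law invariant co-monotone coherent $\rho$ is in particular a law invariant convex risk measure, hence Fatou continuous by Theorem~\ref{thm:fatou}, so Theorem~\ref{the:dual} supplies a dual set $\mathcal{Q}_\rho$. The key computation is the Hardy--Littlewood rearrangement inequality: for any $\mathbb{Q}\in\mathcal{Q}$,
\[
E_\mathbb{Q}[-X]=E\!\left[-X\,\tfrac{d\mathbb{Q}}{d\mathbb{P}}\right]\leq\int_0^1 F^{-1}_{-X}(t)\,F^{-1}_{\frac{d\mathbb{Q}}{d\mathbb{P}}}(t)\,dt,
\]
with equality when $-X$ and $\tfrac{d\mathbb{Q}}{d\mathbb{P}}$ are co-monotone. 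Co-monotonic additivity, via a Choquet/distortion (Schmeidler-type) argument, forces the dual set to reduce to a single density law, so that $\rho(X)$ equals the right-hand side for one increasing quantile $\psi=F^{-1}_{\frac{d\mathbb{Q}}{d\mathbb{P}}}$. Writing $\phi(s)=\psi(1-s)$ yields a decreasing spectrum, and the elementary identity
\[
\phi(s)=\int_{[s,1]}\tfrac{1}{\alpha}\,dm(\alpha)
\]
sets up a bijection between decreasing normalised spectra and probability measures $m$ on $(0,1]$; substituting and invoking $ES^\alpha(X)=-\tfrac1\alpha\int_0^\alpha F^{-1}_X(u)\,du$ turns the spectral integral into $\int_{(0,1]}ES^\alpha(X)\,dm$, which is exactly \eqref{eq:dual2com}, with $m$ pinned down by the stated quantile relation.

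For part (ii), a law invariant coherent $\rho$ need not be co-monotone, but its dual set $\mathcal{Q}_\rho$ is law invariant (stable under measure-preserving transformations of the density), so applying the same rearrangement step to each $\mathbb{Q}\in\mathcal{Q}_\rho$ expresses $\rho$ as the supremum of the co-monotone pieces built in part (iii); collecting the corresponding mixing measures into $\mathcal{M}_\rho$ gives \eqref{eq:dual2coh}. For part (i), Theorem~\ref{thm:fatou} again yields Fatou continuity, hence the penalized representation $\rho(X)=\sup_{\mathbb{Q}\in\mathcal{Q}}\{E_\mathbb{Q}[-X]-\alpha^{min}_\rho(\mathbb{Q})\}$ from Theorem~\ref{the:dual}; since $\mathcal{A}_\rho$ is law invariant, $\alpha^{min}_\rho$ depends on $\mathbb{Q}$ only through the law of its density, so the rearrangement collapses the inner expectation to a spectral integral while transporting the penalty to $\beta^{min}_\rho(m)$, producing \eqref{eq:dual2}.

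The main obstacle I anticipate is the rigorous passage from the abstract dual variable to the one-dimensional parameter: one must verify that the supremum defining $\rho$ is genuinely attained on, or approximated by, co-monotone rearrangements, which requires the atom-less hypothesis together with a careful check that $\mathcal{Q}_\rho$ (respectively the sub-level sets of $\alpha^{min}_\rho$) is stable under measure-preserving maps. Establishing the precise bijection $m\leftrightarrow\phi$ — including the boundary behaviour encoded in $\int_{(u,1]}\tfrac1v\,dm=F^{-1}_{\frac{d\mathbb{Q}}{d\mathbb{P}}}(1-u)$ and the fact that every decreasing, normalised spectrum corresponds to a bona fide probability measure on $(0,1]$ — is the other delicate point, since this is exactly where the $ES$-mixture structure is born.
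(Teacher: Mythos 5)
You should note that the paper itself does not prove Theorem \ref{thm:dual2}; it is stated as background, with the proof deferred to the cited references (\cite{Kusuoka2001}, \cite{Acerbi2002a}, \cite{Fritelli2005}). Your blind reconstruction follows essentially the same route as those classical proofs --- Fatou continuity via Theorem \ref{thm:fatou}, the dual representation of Theorem \ref{the:dual}, Hardy--Littlewood rearrangement on the atomless space to collapse the dual variable $\mathbb{Q}$ to the law of $\frac{d\mathbb{Q}}{d\mathbb{P}}$, and the bijection between decreasing normalised spectra and measures $m\in\mathcal{M}$ --- and is correct, with the sole caveat that your claim in part (iii) that co-monotonic additivity ``forces the dual set to reduce to a single density law'' should be read as saying that a single quantile function attains the Hardy--Littlewood bound for every $X$ (the dual set itself, e.g.\ that of $ES^\alpha$, contains densities of many different laws).
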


	\subsection{Proposed approach}
	
	Let $\rho_\mathcal{I}=\{\rho^i\colon L^\infty\rightarrow\mathbb{R},\:i\in\mathcal{I}\}$ be some (a priori specified) collection of risk measures, where $\mathcal{I}$ is a non-empty set. We write, for fixed $X\in L^\infty$, $\rho_\mathcal{I}(X)=\{\rho^i(X),\:i\in\mathcal{I}\}$. We would like to define risk measures as $\rho(X)=f(\rho_\mathcal{I}(X))$, where $f$ is some combination (aggregation) function. When $\mathcal{I}$ is finite with dimension $n$, we have that $f\colon\mathbb{R}^n\rightarrow\mathbb{R}$. This situation, which is common in practical matters, simplifies the framework. However, as the introduction exposes, when $\mathcal{I}$ is an arbitrary set, we need a more complex setup. Examples of such complexity are that, in this case, we may have to deal with integration and measurability, some set operations may not preserve topological properties, and the absence of a canonical functional space to be the domain of $f$.

	Consider the measurable space $(\mathcal{I},\mathcal{G})$. We define  $K^0=K^0(\mathcal{I},\mathcal{G})$ and $K^\infty=K^\infty(\mathcal{I},\mathcal{G})$ as the spaces of  finite and bounded  random variables, respectively. In these spaces, we understand equalities, inequalities, and limits in the point-wise sense. We define $\mathcal{V}$ as the set of finitely additive measures $\mu$ in $(\mathcal{I},\mathcal{G})$ such that $\mu(\mathcal{I})=1$. With some abuse of notation, we write $\int_\mathcal{I} fd\mu$ for the usual bi-linear form. In order to avoid measurability issues, we make the following assumption.
	
	\begin{Asp}\label{asp:measure}
		The maps $R_X\colon\mathcal{I}\rightarrow\mathbb{R}$, defined as $R_X(i)=\rho^i(X)$, are $\mathcal{G}$-measurable for any $X\in L^\infty$.   
	\end{Asp}
	
	\begin{Rmk}
		A possible, but not unique, choice is when $\mathcal{G}=\sigma(\{i\rightarrow\rho^i(X)\colon X\in L^\infty\})=\sigma(\{R_X^{-1}(B)\colon B\in\mathcal{B}(\mathbb{R}),X\in L^\infty\})$, where $\mathcal{B}(\mathbb{R})$ is the Borel set of $\mathbb{R}$. Another possibility is, of course, the power set $2^\mathcal{I}$. A situation of interest is when $\mathcal{G}$ is a Borel sigma-algebra and $i\rightarrow\rho^i(X)$ continuous for any $X$. Other case of interest is $\mathcal{I}=[0,1]$ and $\rho^i$ composed by $VaR^i$ or $ES^i$, for instance.
	\end{Rmk}
	
	We can associate the domain of $f$ with $\mathcal{X}=\mathcal{X}{\rho_\mathcal{\mathcal{I}}}=span(\{R\in K^0\colon\exists\: X\in L^\infty\; \text{s.t}.\; R(i)=\rho^i(X),\:\forall\:i\in\mathcal{I}\}\cup\{1\})=span(\{R\in K^0\colon R=R_X,X\in L^\infty\}\cup\{1\})$. The linear span is in order to preserve vector space operations. We can identify $\rho_\mathcal{I}(X)$ with $R_X$ by $\rho_{\mathcal{I}}\colon L^\infty\rightarrow\mathcal{X}$. From normalization, we have $R_0=0$. When $\rho_\mathcal{I}(X)$ is bounded, which is the case for any monetary risk measure since $\rho(X)\leq\rho(\operatorname{ess}\inf X)=-\operatorname{ess}\inf X<\infty$, similarly for $\rho(X)\geq-\operatorname{ess}\sup X>-\infty$, we have that $\mathcal{X}\subseteq K^\infty$. Under this framework, the composition is a functional $f\colon\mathcal{X}\rightarrow\mathbb{R}$. We use, when necessary, the canonical extension convention that $f(R)=\infty$ for $R\in K^0\backslash\mathcal{X}$. We consider normalized combination functions as $f(R_0)=f(0)=0$.
	
	\begin{Exm}\label{Ex:WC}
		The worst-case risk measure is a functional $\rho^{WC}:L^\infty\rightarrow\mathbb{R}$ defined as
		\begin{equation}\label{eq:WC}
		\rho^{WC}(X)=\sup\limits_{i\in\mathcal{I}}\rho^i(X).
		\end{equation}
		This risk measure is typically considered when the agent (investor, regulator, etc.) seeks protection. When $\mathcal{I}$ is finite, the supremum is, of course, a maximum. This combination is the point-wise supremum $f^{WC}(R)=\sup\{R(i)\colon\:i\in\mathcal{I}\}$. If $\mathcal{X}\subseteq K^\infty$, then $\rho^{WC}<\infty$. When  $\mathcal{I}=\mathcal{Q}$ and $\rho^\mathbb{Q}(X)=E_\mathbb{Q}[-X]-\alpha(\mathbb{Q})$, with $\alpha\colon\mathcal{Q}\rightarrow\mathbb{R}_+\cup\{\infty\}$ such that $\inf\{\alpha(\mathbb{Q})\colon\mathbb{Q}\in\mathcal{Q}\}=0$, we have that $\rho^{WC}$ becomes a Fatou continuous convex risk measure as \eqref{eq:dual} in Theorem \ref{the:dual}.  Analogously, for a non-empty closed convex $\mathcal{I}\subseteq\mathcal{Q}$ and $\rho^\mathbb{Q}(X)=E_\mathbb{Q}[-X]$, we have that $\rho^{WC}$ becomes a Fatou continuous coherent risk measure as \eqref{eq:cohdual} in this same Theorem \ref{the:dual}. Analogous analysis can be made to obtain law invariant convex and coherent risk measures as \eqref{eq:dual2coh} and \eqref{eq:dual2}, respectively, as in Theorem \ref{thm:dual2}.
	\end{Exm} 
	\begin{Exm}\label{ex:mu}
		The weighted risk measure is a functional $\rho^\mu:L^\infty\rightarrow\mathbb{R}$ defined as
		\begin{equation}\label{eq:mu}
		\rho^\mu(X)=\int_{\mathcal{I}}\rho^{i}(X)d\mu,
		\end{equation}
		where $\mu\in\mathcal{V}$. This risk measure represents an expectation of $R_X$ regarding $\mu$. Since $i\rightarrow\rho^i(X)$ is $\mathcal{G}$-measurable, the integral is well-defined. In addition, it is finite when $\mathcal{X}\in K^\infty$. When $\mathcal{I}$ is finite, $\rho^\mu$ is a convex mixture of the functionals which compose $\rho_\mathcal{I}$. The combination function is $f^\mu(R)=\int_{\mathcal{I}}Rd\mu$. We have that $|\rho^{\mu_1}(X)-\rho^{\mu_2}(X)|\leq \lvert\rho^{WC}(X)\rvert\lVert\mu_1-\mu_2\rVert_{TV}$, where $\lVert\cdot\rVert_{TV}$ is the total variation norm. Hence it is somehow continuous (robust) regarding the choice of the probability measure $\mu\in\mathcal{V}$. If $\mathcal{I}=(0,1]$ and $\rho^i(X)=ES^i(X)$, we have that $\rho^\mu(X)$ defines a law invariant comonotone convex risk measure as \eqref{eq:dual2com}, which is Fatou continuous due to Theorem \ref{thm:fatou}.  
	\end{Exm}

	\begin{Exm}\label{ex:spec}
		
		A spectral (distortion) risk measure is a functional $\rho^\phi\colon L^\infty\rightarrow\mathbb{R}$ defined as
		\begin{equation}
		\rho^\phi(X)=\int_0^1VaR^\alpha(X)\phi(\alpha)d\alpha,
		\end{equation}
		where $\phi:[0,1]\rightarrow\mathbb{R}_+$ is a non-increasing functional such that $\int_{0}^{1}\phi(u)du=1$. Any law invariant comonotone convex risk measure can be expressed in this fashion. The relationship between this representation and the one in \eqref{eq:dual2com} is given by $\int_{(u,1]}\frac{1}{v}dm=\phi(u)$, where $m\in\mathcal{M}_\rho$. When  $\phi$ is not non-increasing, we have that the risk measure is not convex, and the representation as combinations of ES does not hold. Let  $\mathcal{I}=[0,1]$, $\lambda$ the Lebesgue measure, and $\mu\ll\lambda$ with $\phi(i)=F^{-1}_{\frac{d\mu}{d\lambda}}(1-i)$. Thus $\int_\mathcal{I}\phi d\lambda=1$ and $\rho^\phi(X)=\int_{\mathcal{I}}\rho^i(X)\phi(i)d\lambda$. By choosing $\rho^i(X)=VaR^i(X)$, we have that any spectral risk measure is a special case of $\rho^\mu$. 
		
	\end{Exm}
	
	\begin{Exm}\label{ex:ut}
		Consider the risk measure $\rho^{u}(X)\colon L^\infty\rightarrow\mathbb{R}$ defined as
		\begin{equation}\label{eq:uti}
		\rho^u(X)=u(\rho_{\mathcal{I}}(X)),
		\end{equation}
		where $u:\mathcal{X}\rightarrow\mathbb{R}$ is a monetary utility in the sense that if $R\geq S$, then $u(R)\geq u(S)$ and $u(R+C)=u(R)+C,\:C\in\mathbb{R}$. In this case, the combination is $f^u=u$. Note that $u(R)$ can be identified with $\pi(-R)$, where $\pi$ is a risk measure on $\mathcal{X}$. For instance, one can pick $\pi$ as EL, VaR, ES, or ML. In these cases we would obtain for some base probability $\mu$, respectively the following combinations: $f^\mu$, $F^{-1}_{R,\mu}(1-\alpha)$, $\frac{1}{\alpha}\int_0^\alpha F^{-1}_{R,\mu}(1-s)ds$, and $\operatorname{ess}\sup_\mu R$. 
	\end{Exm}
	
	\begin{Exm}
		For this example we denote  $\mathbb{F}=\{F_{X,\mathbb{Q}}\colon\:X\in L^\infty,\:\mathbb{Q}\in\mathcal{P}\}$. In this case, uncertainty is linked to probabilities in the sense that  $\mathcal{I}\subseteq\mathcal{P}$ and we can define risk measurement under the intuitive idea that we obtain the same functional from distinct probabilities that represent scenarios. We then have a probability-based risk measurement as a family of risk measures $\rho_\mathcal{I}=\{\rho^\mathbb{Q}: L^\infty\rightarrow \mathbb{R},\:\mathbb{Q}\in\mathcal{I}\}$ such that \begin{equation}
		\rho^\mathbb{Q}(X)=\mathcal{R}^\rho(F_{X,\mathbb{Q}}),\:\forall\:X\in L^\infty,\:\forall\:\mathbb{Q}\in\mathcal{I},
		\end{equation}where $\mathcal{R}^\rho:\mathbb{F}\rightarrow\mathbb{R}$ is called risk functional. In this context, $f$ can be any statistical map of interest to deal with this probabilistic effect, such as median and mode. This setup can also be utilized to deal with probabilistic issues and uncertainty. In fact, under a suitable choice for $f$ connected to dispersion, it is possible to use our approach to quantify model uncertainty.
		
	\end{Exm}

	\section{Properties}\label{sec:prop}
	
	\subsection{Properties of combinations}
	In this section, we expose results regarding the preservation of properties for composed risk measures $\rho=f(\rho_{\mathcal{I}})$ based on the properties of both $\rho_{\mathcal{I}}$ and $f$. We begin by defining the properties of the composition $f$.
	
	\begin{Def}\label{def:comp}
		A combination $f:\mathcal{X}\rightarrow\mathbb{R}$ may have the following properties:
		\begin{enumerate}
			\item Monotonicity [M]: if $R\geq S$, then $f(R)\geq f(S),\:\forall\:R,S\in\mathcal{X}$.
			\item Translation Invariance [TI]: $f(R+C)=f(R)+C,\:\forall\:R,S\in\mathcal{X},\:\forall\:C\in\mathbb{R}$.
			\item Positive Homogeneity [PH]: $f(\lambda S)=\lambda f(S),\:\forall\:R\in\mathcal{X},\:\forall\:\lambda\geq0$.
			\item Convexity [C]: $f(\lambda R+(1-\lambda)S)\leq\lambda f(R)+(1-\lambda)f(S),\:\forall\:\lambda\in[0,1],\:\forall\:R,S\in\mathcal{X}$.
			\item Additivity [A]: $f(R+S)= f(R)+f(S),\:\forall\:R,S\in\mathcal{X}$.
			\item Fatou continuity [FC]: If $\lim\limits_{n\rightarrow\infty}R_n=R\in K^\infty$, with $\{R_n\}_{n=1}^\infty\subseteq\mathcal{X}$ bounded, then $f(R)\leq\liminf\limits_{n\rightarrow\infty}f(R_n)$.
		\end{enumerate}
	\end{Def}
	
	\begin{Rmk}
		Such properties for the combination function $f$ are parallel to those of risk measures, exposed in Definition \ref{def:risk}. Note the adjustment in signs from there. We use the same terms indiscriminately for $f$ and $\rho$ with reasoning to fit the context. We could have imposed a determined set of properties for the combination. However, we choose to keep a more general framework where it may or may not possess such properties.
	\end{Rmk}

	\begin{Prp}\label{prp:fmu}
		Let $\mathcal{X}\subseteq K^\infty$. We have that:
		\begin{enumerate}
			\item $f^{WC}$ defined as in Example \ref{Ex:WC} fulfills [M], [TI], [PH], [C] and [FC].
			\item $f^\mu$ defined as in Example \ref{ex:mu} fulfills [M], [TI], [PH], [C], [A] and [FC].
		\end{enumerate}
	\end{Prp}
	
	\begin{proof}
		\begin{enumerate}	
			\item	Properties [M], [TI], [PH] and [C] are obtained directly from the definition of supremum. Regarding [FC], let $\{R_{n}\}_{n=1}^\infty\subseteq\mathcal{X}$ bounded such that $\lim\limits_{n\rightarrow\infty}R_{n}=R\in K^\infty$. Then we have that \[f^{WC}(R)=\sup\lim\limits_{n\rightarrow\infty}R_{n}\leq\liminf\limits_{n\rightarrow\infty}\sup R_{n}=\liminf\limits_{n\rightarrow\infty}f^{WC}(R_{n}).\] 
			\item 	From the properties of integration, we have that $f^\mu$ respects [M], [TI], [PH], [C] and [A]. For [FC], let  $\{R_{n}\}_{n=1}^\infty\subseteq\mathcal{X}$ bounded such that $\lim\limits_{n\rightarrow\infty}R_{n}=R\in K^\infty$. Then we have from Dominated Convergence that  \[f^\mu(R)=\int_{\mathcal{I}}\lim\limits_{n\rightarrow\infty}R_{n}d\mu
			\leq\liminf\limits_{n\rightarrow\infty}\int_{\mathcal{I}}R_{n}d\mu=\liminf\limits_{n\rightarrow\infty}f^\mu(R_{n}).\]
		\end{enumerate}
	\end{proof}
	
	\begin{Rmk}
		Note that for any combination $f$ with the property of Boundedness, i.e  $|f(R)|\leq f^{WC}(R),\:\forall\:R\in\mathcal{X}$, we have $\rho(X)\leq\rho^{WC}(X)$. Consequently $\mathcal{A}_{\rho^{WC}}\subseteq\mathcal{A}_\rho$. From Theorem \ref{the:dual} applied to functionals over $K^\infty$, we have that $\{f^\mu\}_{\mu\in\mathcal{V}}$ are the only combination functions that fulfill all properties in Definition \ref{def:comp}. 
	\end{Rmk}
	
	\subsection{Financial properties}
	
	We now focus on the preservation of financial properties.
	
	\begin{Prp}\label{prp:prop}
		Let $\rho_\mathcal{I}=\{\rho^i\colon L^\infty\rightarrow\mathbb{R},\:i\in\mathcal{I}\}$ be a collection of risk measures, $f\colon\mathcal{X}\rightarrow\mathbb{R}$, and $\rho\colon L^\infty\rightarrow\mathbb{R}$ a risk measure defined as $\rho(X)=f(\rho_\mathcal{I}(X))$. Then:
		\begin{enumerate}
			\item If $\rho_\mathcal{I}$ is composed of risk measures with [M], and $f$ possesses this same property, then also does $\rho$.
			\item If $\rho_\mathcal{I}$ is composed of risk measures with [TI], and $f$ possesses this same property, then also does $\rho$.
			\item If $\rho_\mathcal{I}$ is composed of risk measures with [C] and $f$ possesses this same property in pair with [M], then $\rho$ fulfills [C].
			\item If $\rho_\mathcal{I}$ is composed of risk measures with [PH], and $f$ possesses this same property, then also does $\rho$.
			\item If $\rho_\mathcal{I}$ is composed of law invariant risk measures, then  $\rho$ fulfills [LI].
			\item If $\rho_\mathcal{I}$ is composed of comonotone risk measures and $f$ fulfills [A], then  $\rho$ has [CA].
			\item If $\rho_\mathcal{I}$ is composed of Fatou continuous point-wise bounded risk measures and $f$ has [FC] in pair with [M], then also does $\rho$.
		\end{enumerate}
	\end{Prp}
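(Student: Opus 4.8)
The plan is to treat all seven items through a single mechanism: for each financial property, first record what the hypothesis on the family $\rho_{\mathcal{I}}$ says about the element $R_X=\rho_{\mathcal{I}}(X)\in\mathcal{X}$, and then feed that into the corresponding property of $f$. The elementary observations that drive everything are the following ``lifting'' facts, each obtained by evaluating the relevant property of $\rho^i$ at every $i\in\mathcal{I}$: if each $\rho^i$ is monotone then $X\leq Y$ forces $R_X\geq R_Y$ pointwise; translation invariance of each $\rho^i$ gives $R_{X+C}=R_X-C$; convexity gives $R_{\lambda X+(1-\lambda)Y}\leq\lambda R_X+(1-\lambda)R_Y$; positive homogeneity gives $R_{\lambda X}=\lambda R_X$; law invariance gives $R_X=R_Y$ whenever $F_X=F_Y$; co-monotonic additivity gives $R_{X+Y}=R_X+R_Y$ for co-monotone $X,Y$; and Fatou continuity gives $R_X\leq\liminf_{n}R_{X_n}$ pointwise whenever $\lim_n X_n=X$ with $\{X_n\}$ bounded.

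With this dictionary, items (i), (ii), (iv), (v) and (vi) are immediate: one applies $f$ to the displayed identity and uses the matching property of $f$. For instance, in (ii) one writes $\rho(X+C)=f(R_{X+C})=f(R_X-C)=f(R_X)-C=\rho(X)-C$ using translation invariance of $f$ with constant $-C$; in (v) the equality $R_X=R_Y$ gives $\rho(X)=f(R_X)=f(R_Y)=\rho(Y)$ with no hypothesis on $f$ at all; and (iv) and (vi) are the analogous one-line computations using positive homogeneity and additivity of $f$. Item (iii) is the first place where two properties of $f$ must cooperate: the convexity lifting yields only the pointwise inequality $R_{\lambda X+(1-\lambda)Y}\leq\lambda R_X+(1-\lambda)R_Y$, so I would first invoke monotonicity of $f$ to get $f(R_{\lambda X+(1-\lambda)Y})\leq f(\lambda R_X+(1-\lambda)R_Y)$ and only then convexity of $f$ to bound the right-hand side by $\lambda\rho(X)+(1-\lambda)\rho(Y)$; this explains the phrase ``in pair with Monotonicity''.

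The hard part is item (vii). Fatou continuity of the $\rho^i$ delivers only $R_X\leq\liminf_n R_{X_n}$ pointwise, and crucially the sequence $\{R_{X_n}\}$ need not converge in $K^\infty$, so Fatou continuity of $f$ (which is stated for convergent sequences in $\mathcal{X}$) cannot be applied to $\{R_{X_n}\}$ directly. My plan is to manufacture a monotone sequence: set $T_m=\inf_{n\geq m}R_{X_n}$, so that $T_m$ increases pointwise to $R^{*}:=\liminf_n R_{X_n}\geq R_X$. Monotonicity of $f$ together with $T_m\leq R_{X_n}$ for $n\geq m$ gives $f(T_m)\leq\inf_{n\geq m}f(R_{X_n})$, whence $\liminf_m f(T_m)\leq\liminf_n f(R_{X_n})$; applying Fatou continuity of $f$ to $T_m\to R^{*}$ yields $f(R^{*})\leq\liminf_m f(T_m)$; and a final use of monotonicity of $f$ on $R_X\leq R^{*}$ gives $\rho(X)=f(R_X)\leq f(R^{*})\leq\liminf_n f(R_{X_n})$. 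The obstacle I expect to wrestle with here is admissibility of the auxiliary data: one must guarantee that $R^{*}\in K^\infty$ and that the sequence $\{T_m\}$ is bounded and lies in the domain on which Fatou continuity of $f$ is available, and it is precisely the uniform boundedness of the family $\rho_{\mathcal{I}}$ --- ensuring $\{R_{X_n}\}$ sits in a common bounded subset of $K^\infty$ --- that I would lean on to secure this step.
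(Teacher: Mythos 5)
Your items (i)--(vi) coincide exactly with the paper's proof: the same lifting of each property from the $\rho^i$ to the element $R_X$, followed by the matching property of $f$, including the two observations the paper also makes, namely that (v) needs no hypothesis on $f$ and that (iii) must invoke Monotonicity of $f$ before its Convexity. The genuine divergence is item (vii), and there your argument is more careful than the paper's own. The paper writes directly $\rho(X)=f(R_X)\leq f(\liminf_{n}R_{X_n})\leq\liminf_{n}f(R_{X_n})$ and treats the second inequality as Fatou continuity of $f$; but the Fatou property as defined applies only to sequences in $\mathcal{X}$ that converge pointwise, and $\{R_{X_n}\}$ need not converge, so that step is not literally licensed. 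Your monotone truncation $T_m=\inf_{n\geq m}R_{X_n}\nearrow R^{*}=\liminf_{n}R_{X_n}$ supplies exactly the missing convergent sequence, and your three-step chain (Monotonicity to get $f(T_m)\leq\inf_{n\geq m}f(R_{X_n})$, Fatou continuity along $T_m\to R^{*}$, Monotonicity on $R_X\leq R^{*}$) is the standard and correct way to deduce the generalized Fatou inequality that the paper implicitly assumes; so on this point your proof repairs a gap in the paper's. One caveat: the obstacle you flag at the end is real, but your proposed cure does not quite meet it. Uniform boundedness of $\rho_{\mathcal{I}}$, together with boundedness of $\{X_n\}$, makes $\{R_{X_n}\}$ and hence $T_m$ and $R^{*}$ bounded in $K^\infty$ --- which is what the Fatou property of $f$ demands of the sequence --- but it cannot place $T_m$ or $R^{*}$ inside $\mathcal{X}$: the linear span $\mathcal{X}$ is not closed under pointwise infima, so Monotonicity and Fatou continuity of $f$, stated on $\mathcal{X}$, must tacitly be read as extending to such limits in $K^\infty$ (as they do for the concrete combinations $f^{WC}$ and $f^\mu$, and as the paper's own definition of Fatou continuity already suggests by allowing the limit to lie in $K^\infty$). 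This residual defect is inherited from the paper --- its $\liminf_{n}R_{X_n}$ suffers from it too, on top of the convergence issue you fixed --- so your argument is, if anything, the tighter of the two.
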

	
	\begin{proof}
		
		\begin{enumerate}
			\item Let $X,Y\in L^\infty$ with $X\geq Y$. Then $\rho^i(X)\leq\rho^i(Y),\:\forall\:i\in\mathcal{I}$. Thus, $R_X\leq R_Y$ and $\rho(X)=f(R_X)\leq f(R_Y)=\rho(Y)$.
			\item Let $X\in L^\infty$ and $C\in\mathbb{R}$. Then $\rho^i(X+C)=\rho^i(X)-C,\:\forall\:i\in\mathcal{I}$. Thus, $\rho(X+C)=f(R_{X+C})=f(R_X-C)=f(R_X)-C=\rho(X)-C$.
			\item Let $X,Y\in L^\infty$ and $\lambda\in[0,1]$. Then $\rho^i(\lambda X+(1-\lambda)Y)\leq \lambda \rho^i(X)+(1-\lambda)\rho^i(Y),\:\forall\:i\in\mathcal{I}$. Thus, $\rho(\lambda X+(1-\lambda Y))=f(R_{\lambda X+(1-\lambda Y)})\leq f(\lambda R_X+(1-\lambda)R_Y)\leq\lambda\rho(X)+(1-\lambda)\rho(Y)$.
			\item Let $X\in L^\infty$ and $\lambda\geq0$. Then $\rho^i(\lambda X=\lambda \rho^i(X),\:\forall\:i\in\mathcal{I}$. Thus $\rho(\lambda X)=f(R_{\lambda X})=f(\lambda R_X)=\lambda f(R_X)=\lambda\rho(X)$.
			\item Let $X,Y\in L^\infty$ such that $F_X=F_Y$. Then $\rho^i(X)=\rho^i(Y),\forall\:i\in\mathcal{I}$. Thus $R_X=R_Y$ point-wisely. Hence, $R_X$ and $R_Y$ belong to the same equivalence class on $\mathcal{X}$ and $\rho(X)=f(R_X)=f(R_Y)=\rho(Y)$. 
			\item Let $X,Y\in L^\infty$ be a comonotone pair. Then $\rho^i(X+Y)=\rho^i(X)+\rho^i(Y),\:\forall\:i\in\mathcal{I}$. Thus $\rho(X+Y)=f(R_{X+Y})=f(R_X+R_Y)=f(R_X)+f(R_Y)=\rho(X)+\rho(Y)$.
			\item Let$\{X_n\}_{n=1}^\infty\subseteq L^\infty$ bounded with $\lim\limits_{n\rightarrow\infty}X_n=X\in L^\infty$. Then $\rho^i(X) \leq \liminf\limits_{n\rightarrow\infty} \rho^i( X_{n}),\:\forall\:i\in\mathcal{I}$. Since $\sup_{n\in\mathbb{N}}\rho_{\mathcal{I}}(X_n)\leq\sup_{n\in\mathbb{N}}\lVert X_n\rVert_\infty <\infty$, we get that $\{R_{X_n}\}$ is bounded. Thus $\rho(X)=f(R_X)\leq f(\liminf\limits_{n\rightarrow\infty}R_{X_n})\leq\liminf\limits_{n\rightarrow\infty}f(R_{X_n})=\liminf\limits_{n\rightarrow\infty}\rho(X_n)$.
		\end{enumerate}
	\end{proof}
	
	\begin{Rmk}
		Converse relations are not always guaranteed. For instance, spectral risk measures in Example \ref{ex:spec} are convex despite the collection $\{VaR^\alpha,\alpha\in[0,1]\}$ is not in general. Moreover, The preservation of Subadditivity [SA], i.e. $\rho(X+Y)\leq \rho(X)+\rho(Y),\:\forall\:X,Y\in L^\infty$, is quite similar to the one for [C], obtained by replacing the properties. The same for Comonotone Convexity and Comonotone Subadditivity, see \cite{Kou2013}, which are relaxed counterparts for Comonotonic pairs.
	\end{Rmk}
	
	For completeness, we now investigate the preservation of other properties in the literature on risk measures. 
	
	\begin{Prp}\label{prp:prop3}
		Let $\rho_\mathcal{I}=\{\rho^i\colon L^\infty\rightarrow\mathbb{R},\:i\in\mathcal{I}\}$ be a collection of risk measures, $f\colon\mathcal{X}\rightarrow\mathbb{R}$, and $\rho\colon L^\infty\rightarrow\mathbb{R}$ a risk measure defined as $\rho(X)=f(\rho_\mathcal{I}(X))$. Then:
		\begin{enumerate}
			\item If $\rho_\mathcal{I}$ is composed of risk measures with [C] and $f$ possesses [M] and Quasi-convexity [QC], i.e. $f(\lambda R+(1-\lambda)S)\leq\max\{f(R),f(S)\},\:\forall\:\lambda\in[0,1],\:\forall\:R,S\in \mathcal{X}$, then $\rho$ fulfills [QC].
			\item If  $\rho_\mathcal{I}$ is composed of risk measures with Cash-subadditivity [CS], i.e. $\rho^i(X+C)\geq \rho^i(X)-C,\:\forall\:C\in\mathbb{R}_+,\:\forall\:X\in L^\infty$, and $f$ possesses [M] and [TI], then $\rho$ has [CS].
			\item  If $\rho_\mathcal{I}$ is composed of risk measures with Relevance [R], i.e. $X\leq0$ and $\mathbb{P}(X<0)>0$ imply $\rho^i(X)>0,\:\forall\:X\in L^\infty$, and $f$ has strict [M], i.e. $R>S$ imply $f(R)>f(S),\:\forall\:R,S\in\mathcal{X}$, then $\rho$ has [R].
			\item If $\rho_\mathcal{I}$ is composed of risk measures with Surplus Invariance [SI], i.e. $\rho^i(X)\leq 0$ and $Y^-\leq X^-$ imply $\rho^i(Y)\leq 0,\:\forall\:X,Y\in L^\infty$, and $f$ has [M] together to $f\geq f^{WC}$, then $\rho$ possesses [SI]. 
		\end{enumerate}
	\end{Prp}
	
	\begin{proof}
		\begin{enumerate}
			\item Let $X,Y\in L^\infty$ and $\lambda\in[0,1]$. Then $\rho(\lambda X+(1-\lambda)Y)=f(R_{\lambda X+(1-\lambda Y)})\leq f(\lambda R_X+(1-\lambda)R_Y)\leq\max\{\rho(X),\rho(Y)\}$.
			\item Take $X\in L^\infty$ and $C\in\mathbb{R}_+$. Then $\rho(X+C)\geq f(R_X-C)=f(R_X)-C=\rho(X)-C$.
			\item Let $X\in L^\infty$ such that  $X\leq0$ and $\mathbb{P}(X<0)>0$. Then $\rho_\mathcal{I}(X)>0$ point-wisely in $\mathcal{X}$. Thus, from strict Monotonocity of $f$ we get $\rho(X)=f(R_X)>0$.
			\item Let $X,Y\in L^\infty$ such that $\rho(X)\leq 0$ and $Y^-\leq X^-$. Thus, from hypotheses we get $\rho^i(X)\leq f^{WC}(R_X)\leq f(R_X)\leq 0$ for any $i\in\mathcal{I}$. This leads to $\rho_{\mathcal{I}}(Y)\leq 0$  point-wisely in $\mathcal{X}$. Hence, $\rho(Y)=f(R_Y)\leq 0$.
		\end{enumerate}
	\end{proof}
	
	\begin{Rmk}\label{rmk:prop}
		See \cite{Cerreia2011}, \cite{ElKaroui2009}, \cite{Delbaen2012} and \cite{Koch-Medina2017} for  details on [QC], [CS], [R] and [SI], respectively. Unfortunately, [QC] of $\rho_{\mathcal{I}}$ is not preserved if $f$ is monotone and quasi-convex (or even convex). The proof of such a claim relies on the fact that $\mathcal{X}$ does not have a total order, in contrast to the case of the real line where $f\circ\rho$ would be quasi-convex. Moreover, even in the case exposed in item (i) of the last Proposition, in order to guarantee preservation for [CS], which is the typical companion for [QC], we must assume that $f$ possesses [TI]. Since it would imply [C], we would be back at the original framework of the paper.
	\end{Rmk}
	
	\subsection{Continuity properties}
	
	In this subsection, the goal is to preserve continuity properties beyond [FC].
	
	\begin{Prp}\label{prp:WCcont}
		Let $\rho_\mathcal{I}=\{\rho^i\colon L^\infty\rightarrow\mathbb{R},\:i\in\mathcal{I}\}$ be a collection of point-wise bounded risk measures, $f\colon\mathcal{X}\rightarrow\mathbb{R}$, and $\rho\colon L^\infty\rightarrow\mathbb{R}$ a risk measure defined as $\rho(X)=f(\rho_\mathcal{I}(X))$.  Then:
		\begin{enumerate}
			\item If  $\rho_\mathcal{I}$ is composed of Lipschitz continuous risk measures, and $f$ fulfills [M], [SA], and Boundedness, then $\rho$ is Lipschitz continuous. 
			\item If  $\rho_\mathcal{I}$ is composed of risk measures that possess continuity from above, below, or Lebesgue and $f$ is Fatou continuous, then $\rho$ is Fatou continuous for non-increasing sequences, non-decreasing sequences, or any sequences, respectively.
			\item If $\rho_\mathcal{I}$ is composed of risk measures that posses any property among continuity from above, below or Lebesgue, and $f$ is Lebesgue continuous, i.e. $\lim\limits_{n\rightarrow\infty}R_n=R$ implies $f(R)=\lim\limits_{n\rightarrow\infty}f(R_n)$, $\forall\:\{R_n\}_{n=1}^\infty\subseteq\mathcal{X}$ bounded and any $R\in\mathcal{X}\cap K^\infty$, then $\rho$ also does.
		\end{enumerate}
	\end{Prp}
	
	\begin{proof}
		
		\begin{enumerate}
			\item From [M] and [SA] of $f$ and $R_{X}\leq R_Y+|R_{X}-R_Y|$ we have that $|f(R_{X})-f(R_Y)|\leq f(|R_{X}-R_Y|)$. Moreover, $\lvert \rho^i(X)-\rho^i(Y)\rvert\leq \lVert X-Y\rVert_\infty,\:\forall\:X,Y\in L^\infty,\:\forall\:i\in\mathcal{I}$. Then from the Boundedness of $f$, we get
			\[	\left|  \rho(X)-\rho(Y)\right| \leq f(\left|  R_{X}-R_Y\right|)\leq f^{WC}(\left|  R_{X}-R_Y\right|) \leq\lVert X-Y\rVert_\infty,\:\forall\:X,Y\in L^\infty.\]
			\item Let  $\{X_n\}_{n=1}^\infty\subseteq L^\infty$ bounded such that $\lim\limits_{n\rightarrow\infty}X_{n}=X\in L^\infty$ and $\rho^i$ Lebesgue continuous for any $i\in\mathcal{I}$. Then we have $\{R_{X_n}\}$ bounded and $\lim\limits_{n\rightarrow\infty}R_{X_n}=R_X$ point-wise. Hence
			\[\rho(X)=f\left(\lim\limits_{n\rightarrow\infty}R_{X_n}\right)\leq\liminf\limits_{n\rightarrow\infty}f(R_{X_n})=\liminf\limits_{n\rightarrow\infty}\rho(X_n).\] When each $\rho^i$ is continuous from above or below, the same reasoning which is restricted to non-decreasing or non-increasing sequences, respectively, is valid.
			\item Similar to (ii), but in this case $f\left(\lim\limits_{n\rightarrow\infty}R_{X_n}\right)=\lim\limits_{n\rightarrow\infty}f(R_{X_n})$.	
			
		\end{enumerate}
	\end{proof}
	
	\begin{Rmk}
		Item (i) can be generalized to $\rho_{\mathcal{I}}$ uniformly equicontinuous, i.e. the $\delta-\epsilon$ criteria does not depend on $i\in\mathcal{I}$, from which Lipschitz continuity is a special case. Moreover, let $\mathcal{I}=\mathcal{Q}$, $\rho^\mathbb{Q}(X)=E_\mathbb{Q}[-X]$ and $f=f^{WC}$. In this case, we have $\rho=ML$, which is not continuous from below even  $\rho^\mathbb{Q}$ possessing such property. However, ML is Fatou continuous. This example illustrates item (ii) in the last Proposition. Moreover, $f^\mu$ satisfies Lebesgue continuity as in (iii) when $\mathcal{X}\subseteq K^\infty$, which is the case for monetary risk measures, for instance.
	\end{Rmk}

	\section{Representations}\label{sec:main}
\subsection{General result}

In this section, we expose results regarding the representation of composed risk measures $\rho=f(\rho_{\mathcal{I}})$ based on the properties of both $\rho_{\mathcal{I}}$ and $f$. The goal is to highlight the role of such terms. We begin the preparation with a lemma for representation of $f$, without dependence on the properties of $\rho_{\mathcal{I}}$.

\begin{Lmm}\label{lmm:f}
	Let $\mathcal{X}\subseteq K^\infty$. A functional $f\colon \mathcal{X}\rightarrow\mathbb{R}$,  posses [M], [TI] and [C] if and only if it can be represented as
	\begin{equation}\label{eq:convf}
	f(R)=\sup\limits_{\mu\in\mathcal{V}}\left\lbrace\int_{\mathcal{I}}Rd\mu-\gamma_f(\mu) \right\rbrace,\:\forall\:R\in\mathcal{X},
	\end{equation}
	where $\gamma_f:\mathcal{V}\rightarrow\mathbb{R}_+\cup\{\infty\}$ is defined as
	\begin{equation}\label{eq:penf}
	\gamma_f(\mu)=\sup\limits_{R\in\mathcal{X}}\left\lbrace\int_{\mathcal{I}}Rd\mu-f(R) \right\rbrace.
	\end{equation}
\end{Lmm}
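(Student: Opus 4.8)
The plan is to prove the two implications separately, following the template of the classical dual representation in Theorem \ref{the:dual} but carried out over the pairing between $K^\infty$ and the probability measures $\mathcal{V}$ on $(\mathcal{I},\mathcal{G})$, with the sign adjustments noted after Definition \ref{def:comp}.

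For the easy direction (representation implies the four properties) I would read each property directly off formula \eqref{eq:convf}, exactly as in the proof of Proposition \ref{prp:fmu}. Monotonicity follows because each $\mu\in\mathcal{V}$ is a positive measure, so $R\geq S$ gives $\int_{\mathcal{I}}Rd\mu\geq\int_{\mathcal{I}}Sd\mu$; Translation Invariance follows from $\int_{\mathcal{I}}(R+C)d\mu=\int_{\mathcal{I}}Rd\mu+C$ since $\mu$ has unit mass; Convexity is immediate because $f$ is a supremum of the affine maps $R\mapsto\int_{\mathcal{I}}Rd\mu-\gamma_f(\mu)$; and Fatou continuity follows by applying Dominated Convergence inside each such affine map and then taking the supremum over $\mu$. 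I would also record here that $\gamma_f\geq 0$, that $\gamma_f$ is convex and lower semicontinuous as a supremum of affine functions of $\mu$, and that normalization $f(0)=0$ yields $\inf_{\mu}\gamma_f(\mu)=0$.

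For the converse I would take $\gamma_f$ as in \eqref{eq:penf}, the convex conjugate of $f$ restricted to $\mathcal{V}$. The inequality $f(R)\geq\sup_{\mu}\{\int_{\mathcal{I}}Rd\mu-\gamma_f(\mu)\}$ is automatic from the definition of $\gamma_f$ as a supremum. The substance is the reverse inequality, which I would obtain by a Hahn--Banach separation in the spirit of Fenchel--Moreau. Fix $R_0\in\mathcal{X}$ and $c<f(R_0)$. Convexity of $f$ makes the epigraph $\{(R,t):t\geq f(R)\}$ convex, and Fatou continuity supplies the lower semicontinuity needed to strictly separate the point $(R_0,c)$ from that epigraph by a continuous linear functional on $K^\infty\times\mathbb{R}$. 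The vertical component of the separating functional is strictly positive since the epigraph is unbounded upward, so after normalizing it to one the horizontal component becomes a continuous linear functional on $K^\infty$ supporting $f$ from below at $R_0$; Monotonicity of $f$ forces this functional to be given by a positive measure (otherwise $\gamma_f$ would be $+\infty$ there and the support would be vacuous), and Translation Invariance forces that measure to have total mass one. This produces $\mu$ with $\int_{\mathcal{I}}R_0d\mu-\gamma_f(\mu)\geq c$, and letting $c\uparrow f(R_0)$ closes the gap.

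I expect the main obstacle to be ensuring that the separating measure is genuinely an element of $\mathcal{V}$, i.e. countably additive rather than merely finitely additive. On $L^\infty(\Omega,\mathcal{F},\mathbb{P})$ this is handled automatically by working in the weak-$\ast$ topology $\sigma(L^\infty,L^1)$, whose dual is exactly $L^1$ and hence countably additive; here $(\mathcal{I},\mathcal{G})$ carries no reference measure, so countable additivity must be extracted directly from Fatou continuity. The point is that Fatou continuity encodes a continuity-from-above behaviour along bounded pointwise-convergent sequences, and a Dominated Convergence argument applied to the finitely additive set function representing the separating functional upgrades it to a countably additive probability in $\mathcal{V}$; this is precisely the step where the Fatou hypothesis is indispensable. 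Once this is secured the representation \eqref{eq:convf} holds and $\gamma_f$ is the minimal penalty. Alternatively, one may observe that $R\mapsto f(-R)$ is a convex Fatou-continuous risk measure on $K^\infty(\mathcal{I},\mathcal{G})$ and invoke the reference-measure-free version of Theorem \ref{the:dual} verbatim, with $\mathcal{V}$ playing the role of $\mathcal{Q}$.
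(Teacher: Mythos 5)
Your closing sentence --- read $f(\cdot)$ as $\pi(-\cdot)$ for a Fatou continuous convex risk measure $\pi$ on $K^\infty(\mathcal{I},\mathcal{G})$ and invoke the representation theorem without a reference probability (Theorem 4.22 of \cite{Follmer2016}) --- is exactly the proof the paper gives, and it is the correct way to close the argument; the easy direction as you present it (affine suprema plus dominated convergence under each countably additive $\mu\in\mathcal{V}$) also matches what the paper treats as straightforward. Had you led with that observation, your proposal and the paper's proof would coincide.

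The route you actually develop, however, has a genuine gap at its decisive step. The claim that ``a Dominated Convergence argument applied to the finitely additive set function representing the separating functional upgrades it to a countably additive probability in $\mathcal{V}$'' is not a proof and, read literally, is false: Fatou continuity of $f$ does not transfer to individual dual elements, and a finitely additive $\mu_0$ with finite --- even zero --- penalty $\gamma_f(\mu_0)$ need not be countably additive. Take $\mathcal{I}=\mathbb{N}$, $\mathcal{G}=2^{\mathbb{N}}$ and $f$ the pointwise supremum $f^{WC}$, which is Fatou continuous by Proposition \ref{prp:fmu}. Since $\int_{\mathcal{I}}R\,d\mu_0\leq\sup_i R(i)$ for every finitely additive probability $\mu_0$, every such $\mu_0$ --- including a purely finitely additive ultrafilter limit ``at infinity'' --- satisfies $\gamma_{f^{WC}}(\mu_0)=0$; yet for $R_n=1_{\{n,n+1,\dots\}}\downarrow 0$ one has $\int R_n\,d\mu_0=1\not\to 0$, so $\mu_0$ is not countably additive and dominated convergence simply fails for it. Worse, Hahn--Banach can genuinely hand you such a functional: at $R_0(i)=-1/i$ every countably additive $\mu$ gives $\int R_0\,d\mu<0=f^{WC}(R_0)$, so any functional supporting $f^{WC}$ exactly at $R_0$ is necessarily purely finitely additive, and no ``upgrade'' of that particular functional is possible. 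What is true --- and is precisely the nontrivial content of the theorem you would be re-proving --- is that the supremum restricted to countably additive measures still returns $f(R)$, typically without being attained; proving this requires replacing the finitely additive optimizer by nearly optimal countably additive ones, not upgrading it, and your appeal to dominated convergence is circular, since validity of dominated convergence along bounded monotone sequences for $\mu_0$ is equivalent to the countable additivity being proved. Two smaller inaccuracies: Fatou continuity is not what makes the separation work (Monotonicity and Translation Invariance already give $\lVert\cdot\rVert_\infty$-Lipschitz continuity of $f$, hence a norm-closed epigraph), it is needed only for the countable-additivity issue; and indicators $1_{A_n}$ need not belong to $\mathcal{X}=span(\{R_X\}\cup\{1\})$, so one cannot even test countable additivity of $\mu_0$ against $f$ directly.
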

\begin{proof}
	The fact that \eqref{eq:convf} possesses [M], [TI] and [C]  is straightforward. For the only if direction, one can understand $f(R)$ as $\pi(-R)$, where $\pi$ is a convex risk measure on $K^\infty$. Note that it is finite. Thus, from Theorem 4.16 in \cite{Follmer2016}  applied to $K^\infty$, and $f(R)=\infty$ for any $R\in K^\infty\backslash\mathcal{X}$ we have that \[f(R)=\pi(-R)=\sup\limits_{\mu\in\mathcal{V}}\left\lbrace\int_{\mathcal{I}}Rd\mu-\sup\limits_{R\in \mathcal{X}}\left[\int_{\mathcal{I}}Rd\mu-f(R) \right] \right\rbrace=\sup\limits_{\mu\in\mathcal{V}}\left\lbrace\int_{\mathcal{I}}Rd\mu-\gamma_f(\mu) \right\rbrace.\]
	
\end{proof}

\begin{Rmk}\label{Rmk:dualf}
	When $R=R_X$ point-wisely for some $X\in L^\infty$, we have that the representation becomes \[	f(R_X)=\sup\limits_{\mu\in\mathcal{V}}\left\lbrace\rho^\mu(X)-\gamma_f(\mu) \right\rbrace.\] If $f$ possesses [PH], then $\gamma_f$ assumes value $0$ in $\mathcal{V}_f=\{\mu\in\mathcal{V}\colon f(R)\geq\int_{\mathcal{I}}Rd\mu,\:\forall\:R\in \mathcal{X}\}$ and $\infty$ otherwise. For instance, $\mathcal{V}_{f^\mu}=\{\mu\}$ and $\mathcal{V}_{f^{WC}}=\mathcal{V}$. Note that $\inf\limits_{\mu\in\mathcal{V}}\gamma_f(\mu)=0$ from the assumption of normalization for $f$.
	
\end{Rmk}

We need the following auxiliary result, which may be of individual interest regarding the integration of probability measures.

\begin{Lmm}\label{Lmm:probs}
	Let $\{\mathbb{Q}^i,i\in\mathcal{I}\}$ such that $i\rightarrow\mathbb{Q}^i(A)$ is $\mathcal{G}$-measurable  for any $A\in\mathcal{F}$. Then $\mathbb{Q}(A)=\int_{\mathcal{I}}\mathbb{Q}^i(A)d\mu,\:\forall\:A\in\mathcal{F},\mathbb{Q}^i\in\mathcal{Q}\:\mu-a.s.$ defines a probability measure. In this case, $E_{\mathbb{Q}}[X]=\int_{\mathcal{I}}E_{\mathbb{Q}^i}[X]d\mu$ for any $X\in L^\infty$.
\end{Lmm}

\begin{proof}
	Let  $\mathbb{Q}(A)=\int_{\mathcal{I}}\mathbb{Q}^i(A)d\mu,\:\forall\:A\in\mathcal{F}$. It is direct that both $\mathbb{Q}(\emptyset)=0$ and $\mathbb{Q}(\Omega)=1$. For countable additivity, let $\{A_n\}_{n\in\mathbb{N}}$ be a collection of mutually disjoint sets. Then, since $i\rightarrow\mathbb{Q}^i(A)$ is bounded $\forall\:A\in\mathcal{F}$ we have \[\mathbb{Q}(\cup_{n=1}^\infty A_n)=\int_{\mathcal{I}}\sum_{n=1}^{\infty}\mathbb{Q}^i(A_n)d\mu=\sum_{n=1}^{\infty}\int_{\mathcal{I}}\mathbb{Q}^i(A_n)d\mu=\sum_{n=1}^{\infty}\mathbb{Q}(A_n).\] Hence $\mathbb{Q}$ is a probability measure. Regarding expectation interchange we have for any $X\in L^\infty$ that $x\rightarrow \mathbb{Q}^i(X\leq x)=F_{X,\mathbb{Q}^i}(x)$ is monotone and right-continuous $i\in\mathcal{I}$. Then $(x,i)\rightarrow\mathbb{Q}^i(X\leq x)$ is $\mathcal{B}(\mathbb{R})\otimes\mathcal{G}$-measurable, indeed integrable. Hence we have \begin{align*}
	E_{\mathbb{Q}}[-X]&=\int_{0}^{\infty}\int_{\mathcal{I}}(1-\mathbb{Q}^i(X\leq x))d\mu dx+\int_{-\infty}^{0}\int_{\mathcal{I}}\mathbb{Q}^i(X\leq x)d\mu dx\\
	&=\int_{\mathcal{I}}\left( \int_{0}^{\infty}(1-\mathbb{Q}^i(X\leq x))dx+\int_{-\infty}^{0}\mathbb{Q}^i(X\leq x)dx\right) d\mu\\
	&=\int_{\mathcal{I}}E_{\mathbb{Q}^i}[-X]d\mu.
	\end{align*}
	By changing signs, we get the claim.
\end{proof}

We also need an assumption to circumvent some measurability issues to avoid the indefiniteness of posterior measure-related concepts, such as integration. \begin{Asp}\label{Lmm:measure}When $\rho_\mathcal{I}=\{\rho^i\colon L^\infty\rightarrow\mathbb{R},\:i\in\mathcal{I}\}$ is a collection of Fatou continuous convex risk measures we assume that  $i\rightarrow\alpha^{min}_{\rho^{i}}(\mathbb{Q})$ is $\mathcal{G}$-measurable for any $\mathbb{Q}\in\mathcal{Q}$.\end{Asp}

\begin{Rmk} Similarly to Assumption \ref{asp:measure}, as a single, but not unique, example for $\mathcal{G}$, one could consider the sigma-algebra generated by all such maps. Again, a situation of interest is when $\mathcal{G}$ is a Borel sigma-algebra and $i\rightarrow\alpha^{\min}_{\rho^i}(\mathbb{Q})$ continuous for any $\mathbb{Q}$. This is the case when $\mathcal{I}=[0,1]$ and $\rho^i$ composed by $ES^i$, or even when $\mathcal{I}=(0,\infty)$ and $\rho^i$ entropic risk measures under penalty $\alpha^{\min}_{\rho^i}(\mathbb{Q})=\frac{1}{i}E\left[\log\left(\frac{d\mathbb{Q}}{d\mathbb{P}} \right) \frac{d\mathbb{Q}}{d\mathbb{P}} \right]$.

\end{Rmk}

The role played by $\rho^\mu$ becomes clear since it can be understood as the expectation under $\mu$ of elements $R_X\in\mathcal{X}$. Thus, it is important to know how the properties of $\rho_\mathcal{I}$ affect the representation of $\rho^\mu$.   Proposition 2.1 of \cite{Ang2018} explores a case with a finite number of coherent risk measures while we address a situation with an arbitrary set of convex risk measures.

\begin{Thm}\label{lmm:dualmu}
	Let $\rho_\mathcal{I}=\{\rho^i\colon L^\infty\rightarrow\mathbb{R},\:i\in\mathcal{I}\}$ be a collection of Fatou continuous convex risk measures and $\rho^{\mu}: L^\infty\rightarrow \mathbb{R}$ defined as in \eqref{eq:mu}. Then:
	\begin{enumerate}
		\item  $\rho^\mu$ can be represented as:
		\begin{equation}\label{eq:convmu}
		\rho^\mu(X)=\sup\limits_{\mathbb{Q}\in\mathcal{Q}}\left\lbrace E_\mathbb{Q}[-X]-\alpha_{\rho^\mu}(\mathbb{Q}) \right\rbrace,\:\forall\:X\in L^\infty,
		\end{equation}
		with   $\alpha_{\rho^\mu}\colon\mathcal{Q}\rightarrow\mathbb{R}_+\cup\{\infty\}$ defined as 
		\begin{equation}\label{eq:pennu}
		\alpha_{\rho^\mu}(\mathbb{Q})=
		\inf\left\lbrace\begin{cases*}
		\int_{\mathcal{I}}\alpha^{min}_{\rho^i}\left(\mathbb{Q}^i\right)d\mu,\text{if}\:i\to\alpha^{min}_{\rho^i}\left(\mathbb{Q}^i\right)\in\mathcal{G}\\
		\infty,\:\text{otherwise}
		\end{cases*} \colon \int_{\mathcal{I}}\mathbb{Q}^id\mu=\mathbb{Q},\:\mathbb{Q}^i\in\mathcal{Q}\:\forall\:i\in\mathcal{I}\right\rbrace .
		\end{equation}
		
		\item If in addition $\rho^i$ fulfills, for every $i\in\mathcal{I}$, [PH], then the representation is 
		\begin{equation}\label{eq:cohmu}
		\rho^\mu(X)=\sup\limits_{\mathbb{Q}\in cl\left(\mathcal{Q}_{\rho^\mu}\right)}E_\mathbb{Q}[-X],\:\forall\:X\in L^\infty,
		\end{equation}
		with $\mathcal{Q}_{\rho^\mu}=\left\lbrace \mathbb{Q}\in\mathcal{Q}\colon \mathbb{Q}=\int_{\mathcal{I}}\mathbb{Q}^id\mu,\mathbb{Q}^i\in\mathcal{Q}_{\rho^i}\:\forall\:i\in\mathcal{I}\right\rbrace $ convex and non-empty, where $cl$ means closure in total variation norm. 
	\end{enumerate}
\end{Thm}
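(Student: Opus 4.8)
The plan is to obtain \eqref{eq:convmu} by feeding the individual dual representations of the $\rho^i$ into the integral defining $\rho^\mu$, and then interchanging the integral over $\mathcal{I}$ with the supremum over dual measures by means of Lemma \ref{lem:int}. Since each $\rho^i$ is Fatou continuous and convex, Theorem \ref{the:dual} gives $\rho^i(X)=\sup_{\mathbb{Q}\in\mathcal{Q}}h^i_X(\mathbb{Q})$ with $h^i_X(\mathbb{Q})=E_{\mathbb{Q}}[-X]-\alpha^{min}_{\rho^i}(\mathbb{Q})$, so that $\rho^\mu(X)=\int_{\mathcal{I}}\sup_{\mathbb{Q}\in\mathcal{Q}}h^i_X(\mathbb{Q})\,d\mu$. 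To invoke Lemma \ref{lem:int} with $\mathcal{Y}=\mathcal{Y}_i=\mathcal{Q}$, I would verify its hypotheses: the maps $i\mapsto h^i_X(\mathbb{Q}^i)$ are $\mathcal{G}$-measurable for every family $\{\mathbb{Q}^i\}$ by Assumption \ref{Lmm:measure}, and $i\mapsto\sup_{\mathbb{Q}}h^i_X(\mathbb{Q})=\rho^i(X)=R_X(i)$ is $\mathcal{G}$-measurable by Assumption \ref{asp:measure}. The delicate point, and where I expect the genuine work, is boundedness: because $\alpha^{min}_{\rho^i}$ may equal $+\infty$, the functional $h^i_X$ is only bounded above (by $\lVert X\rVert_\infty$) and not below. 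This is harmless, however, since the $\epsilon$-optimal selections $\mathbb{Q}^{*,i}$ used in the proof of Lemma \ref{lem:int} satisfy $h^i_X(\mathbb{Q}^{*,i})\geq\rho^i(X)-\epsilon\geq-\lVert X\rVert_\infty-\epsilon$, hence automatically avoid the region where the penalty is infinite and stay uniformly bounded, so that argument applies verbatim on the effective domain.

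After the interchange, $\rho^\mu(X)=\sup_{\{\mathbb{Q}^i\in\mathcal{Q}\}}\left\{\int_{\mathcal{I}}E_{\mathbb{Q}^i}[-X]\,d\mu-\int_{\mathcal{I}}\alpha^{min}_{\rho^i}(\mathbb{Q}^i)\,d\mu\right\}$. By Lemma \ref{Lmm:probs}, the barycenter $\mathbb{Q}:=\int_{\mathcal{I}}\mathbb{Q}^i\,d\mu$ is a probability measure in $\mathcal{Q}$ (absolute continuity passing through the integral) and $\int_{\mathcal{I}}E_{\mathbb{Q}^i}[-X]\,d\mu=E_{\mathbb{Q}}[-X]$. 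Grouping the families $\{\mathbb{Q}^i\}$ according to their barycenter factorises the single supremum as $\sup_{\mathbb{Q}\in\mathcal{Q}}\sup_{\{\mathbb{Q}^i\}:\int\mathbb{Q}^i d\mu=\mathbb{Q}}$, which produces exactly \eqref{eq:convmu} with the inf-convolution penalty \eqref{eq:pennu}; when no admissible family has barycenter $\mathbb{Q}$, the infimum over the empty set is $+\infty$ and such $\mathbb{Q}$ drop out of the outer supremum, consistent with the convention. Non-negativity of $\alpha_{\rho^\mu}$ is immediate from $\alpha^{min}_{\rho^i}\geq0$. For convexity I would fix $\mathbb{Q}_1,\mathbb{Q}_2$, choose near-optimal families $\{\mathbb{Q}^i_1\},\{\mathbb{Q}^i_2\}$, note that $\{\lambda\mathbb{Q}^i_1+(1-\lambda)\mathbb{Q}^i_2\}$ has barycenter $\lambda\mathbb{Q}_1+(1-\lambda)\mathbb{Q}_2$, and apply the convexity of each $\alpha^{min}_{\rho^i}$, letting $\epsilon\to0$.

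For part (ii), the extra Positive Homogeneity makes each $\rho^i$ coherent, so by Theorem \ref{the:dual} the penalty $\alpha^{min}_{\rho^i}$ takes only the values $0$ on $\mathcal{Q}_{\rho^i}$ and $+\infty$ elsewhere. Consequently $\int_{\mathcal{I}}\alpha^{min}_{\rho^i}(\mathbb{Q}^i)\,d\mu$ equals $0$ when $\mathbb{Q}^i\in\mathcal{Q}_{\rho^i}$ for $\mu$-a.e. $i$ and $+\infty$ otherwise, so $\alpha_{\rho^\mu}$ collapses to the $\{0,\infty\}$-indicator of $\mathcal{Q}_{\rho^\mu}$, and \eqref{eq:convmu} reduces to $\rho^\mu(X)=\sup_{\mathbb{Q}\in\mathcal{Q}_{\rho^\mu}}E_{\mathbb{Q}}[-X]$. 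I would then check that $\mathcal{Q}_{\rho^\mu}$ is non-empty, by selecting $\mathbb{Q}^i\in\mathcal{Q}_{\rho^i}\neq\emptyset$ and integrating (measurability being granted by Assumption \ref{Lmm:measure}), and convex, by mixing barycenters and using convexity of each $\mathcal{Q}_{\rho^i}$. Finally, since $\mathbb{Q}\mapsto E_{\mathbb{Q}}[-X]$ is $\sigma(L^1,L^\infty)$-continuous, passing to the closure leaves the supremum unchanged, yielding \eqref{eq:cohmu} over $cl(\mathcal{Q}_{\rho^\mu})$ in the closed-convex form demanded by Theorem \ref{the:dual}. The one substantive hurdle in the whole argument is the legitimacy of the sup--integral interchange together with the measurability and boundedness bookkeeping in the first paragraph; everything else is formal rearrangement once that is secured.
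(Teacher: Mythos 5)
Your proposal is correct and follows essentially the same route as the paper: plug the dual representations from Theorem \ref{the:dual} into the integral defining $\rho^\mu$, interchange supremum and integral via Lemma \ref{lem:int} (with Lemma \ref{Lmm:probs} handling the barycenter measures and Assumption \ref{Lmm:measure} the measurability), group the families $\{\mathbb{Q}^i\}$ by their barycenter to obtain the inf-convolution penalty \eqref{eq:pennu}, prove convexity by mixing near-optimal families, and collapse to the dual-set form \eqref{eq:cohmu} under Positive Homogeneity. Your explicit patch for the fact that $h^i_X$ is unbounded below (the $\epsilon$-optimal selections automatically lie in the effective domain and are uniformly bounded by $\lVert X\rVert_\infty$) addresses a point the paper's proof passes over silently, but it is a refinement of, not a departure from, the same argument.
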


\begin{proof}
	
	\begin{enumerate}
		\item  Note that $\alpha_{\rho^\mu}$ is well-defined since the infimum is not altered for the distinct choices of possible combinations that lead to $\int_{\mathcal{I}}\mathbb{Q}^id\mu=\mathbb{Q}$. 	Non-negativity for $\alpha_{\rho^\mu}$ is straightforward. Also, from Assumption \ref{Lmm:measure}, we have that $\alpha_{\rho^\mu}(\mathbb{Q})\leq\int_{\mathcal{I}}\alpha^{min}_{\rho^i}\left(\mathbb{Q}\right)d\mu,\:\forall\:\mathbb{Q}\in\mathcal{Q}$.	The measurability of $i\rightarrow\rho^i(X)=\sup_{\mathbb{Q}\in\mathcal{Q}}\left\lbrace E_\mathbb{Q}[-X]-\alpha^{\min}_{\rho^i}(\mathbb{Q}) \right\rbrace $ for any $X\in L^\infty$ implies that the following is true for any $\mathbb{Q}\in\mathcal{Q}$:
		\begin{align*}
		\rho^\mu(X)&=\int_{\mathcal{I}}\left( \sup\limits_{\mathbb{Q}\in\mathcal{Q}}\left\lbrace E_{\mathbb{Q}}\left[-X\right]-\alpha^{min}_{\rho^i}(\mathbb{Q}) \right\rbrace\right) d\mu\\
		&\geq\sup\limits_{\left\lbrace \int_{\mathcal{I}}\mathbb{Q}^id\mu=\mathbb{Q},\:i\to\alpha^{min}_{\rho^i}\left(\mathbb{Q}^i\right)\in\mathcal{G}\right\rbrace }\left\lbrace \int_{\mathcal{I}}\left( E_{\mathbb{Q}^i}[-X]-\alpha^{min}_{\rho^i}\left(\mathbb{Q}^i\right) \right)  d\mu \right\rbrace\\
		&\geq\sup\limits_{\mathbb{Q}\in\mathcal{Q}}\left\lbrace E_\mathbb{Q}[-X]-\alpha_{\rho^\mu}(\mathbb{Q}) \right\rbrace.
		\end{align*}
		The last inequality is due to Lemma \ref{Lmm:probs} and the fact that $\alpha^\mu(\mathbb{Q})=\infty$ when $\int_{\mathcal{I}}\mathbb{Q}^id\mu=\mathbb{Q}$ but $i\to\alpha^{min}_{\rho^i}\left(\mathbb{Q}^i\right)$ is not $\mathcal{G}$-measurable. For the converse, consider for each $n\in\mathbb{N}$ the measurable (possibly empty) partition $P^n$ of $\mathcal{I}$ as $P^n=\left\lbrace t^n_k,\:k=1,\dots,n\right\rbrace $. Define  $\alpha_{t^n_k}(\mathbb{Q})=\sup_{i\in t^n_k}\alpha^{\min}_{\rho^i}(\mathbb{Q})$ for any $\mathbb{Q}\in\mathcal{Q}$, with the convention that $\sup\emptyset=0$. Further, for each $n\in\mathbb{N}$ define the map $\alpha^\mu_n\colon\mathcal{Q}\to\mathbb{R}_+\cup\{\infty\}$  as 
		\[\alpha^\mu_n(\mathbb{Q})=\inf\left\lbrace \sum_{k=1}^n\alpha_{t^n_k}(\mathbb{Q}^{t^n_{k}})\mu(t^n_k)\colon \mathbb{Q}=\sum_{k=1}^{n}\mathbb{Q}^{t^n_{k}}\mu(t^n_k),\:\mathbb{Q}^{t^n_{k}}\in\mathcal{Q}\:\forall\:k\in\{1,\dots,n\}\right\rbrace. \] It is clear that $\alpha_n^\mu(\mathbb{Q})\downarrow\alpha^{\rho^\mu}(\mathbb{Q})$ for each $\mathbb{Q}\in\mathcal{Q}$. Define for each $n\in\mathbb{N}$  the map 
		\[\rho^\mu_n(X)=\sup\limits_{\mathbb{Q}\in\mathcal{Q}}\left\lbrace E_\mathbb{Q}[-X]-\alpha^\mu_n(\mathbb{Q})\right\rbrace,\:X\in L^\infty.\] We then have that $\rho^\mu_n(X)\uparrow \rho^\mu(X)$ for any $X\in L^\infty$. Thus, we get for any $X\in L^\infty$ that \begin{align*}
		\rho^\mu(X)&\geq\sup\limits_{\mathbb{Q}\in\mathcal{Q}}\left\lbrace E_\mathbb{Q}[-X]-\alpha_{\rho^\mu}(\mathbb{Q}) \right\rbrace\\
		&\geq \sup_n\sup\limits_{\mathbb{Q}\in\mathcal{Q}}\left\lbrace E_\mathbb{Q}[-X]-\alpha_{\rho^\mu_n}(\mathbb{Q}) \right\rbrace\\
		&=\sup_n \rho_n^\mu(X)=\rho^\mu(X).
		\end{align*}		
		
		Hence, $\rho^\mu(X)=\sup\limits_{\mathbb{Q}\in\mathcal{Q}}\left\lbrace E_\mathbb{Q}[-X]-\alpha_{\rho^\mu}(\mathbb{Q}) \right\rbrace,\:\forall\:X\in L^\infty$.   

		\item 
		We begin by showing that $\mathcal{Q}_\rho^\mu$ satisfies the necessary properties.
		Since every $\mathcal{Q}_{\rho^i}$ is non-empty, we have that $\mathcal{Q}_{\rho^\mu}$ posses at least one element $\mathbb{Q}\in\mathcal{Q}$ such that $\mathbb{Q}=\int_{\mathcal{I}}\mathbb{Q}^id\mu,\mathbb{Q}^i\in\mathcal{Q}_{\rho^i}\:\forall\:i\in\mathcal{I}$. Let $\mathbb{Q}_1,\mathbb{Q}_2\in\mathcal{Q}_{\rho^\mu}$. Then, we have for any $\lambda\in[0,1]$ that $\lambda \mathbb{Q}_1+(1-\lambda)\mathbb{Q}_2=\int_{\mathcal{I}}\left( \lambda \mathbb{Q}_1^i+(1-\lambda)\mathbb{Q}_2^i\right) d\mu$. Since $\mathcal{Q}_{\rho^i}$ is convex for any $i\in\mathcal{I}$ we have that $\lambda \mathbb{Q}_1+(1-\lambda)\mathbb{Q}_2\in\mathcal{Q}_{\rho^\mu}$ as desired. To demonstrate that taking closure does not affect the supremum, let  $\{\mathbb{Q}_n\}_{n=1}^\infty\in\mathcal{Q}_\rho^{\mu}$ such that $\mathbb{Q}_n\rightarrow \mathbb{Q}$ in the total variation norm. Then we have
		\[E_\mathbb{Q}[-X]=\lim\limits_{n\rightarrow\infty}E_{\mathbb{Q}_n}[-X]\leq\sup\limits_{n}E_{\mathbb{Q}_n}[-X]\leq\sup\limits_{\mathbb{Q}\in\mathcal{Q}_{\rho^\mu}}E_\mathbb{Q}[-X].\] In view of  [PH], it is enough to show that $\alpha^{min}_{\rho^\mu}$ is a convex indicator function on $cl\left(\mathcal{\mathbb{Q}}^\mu_\rho\right)$, i.e. it assumes $0$ on $cl\left(\mathcal{\mathbb{Q}}^\mu_\rho\right)$ and $\infty$ otherwise. Note that $\alpha^{min}_{\rho^i}(\mathbb{Q}^i)=0,\:\forall\:\mathbb{Q}^i\in\mathcal{Q}_\rho^i$. Thus, $\alpha_{\rho^\mu}=0$ in $\mathcal{Q}_{\rho^\mu}$ and we have that $0\leq\alpha_{\rho^\mu}^{min}(\mathbb{Q})\leq\alpha_{\rho^\mu}(\mathbb{Q})=0,\:\forall\:\mathbb{Q}\in\mathcal{Q}_{\rho^\mu}$. Due to the lower semi-continuity property, we have that $\alpha^{min}_{\rho^\mu}(\mathbb{Q})=0$ for any limit point $\mathbb{Q}$ of sequences in $\mathcal{Q}_{\rho^\mu}$. Let $\mathbb{Q}\in\mathcal{Q}\backslash cl\left(\mathcal{Q}_{\rho^\mu}\right)$, and assume toward contradiction that  $\alpha^{min}_{\rho^\mu}(\mathbb{Q})=0$. By the Hahn-Banach Theorem, we can find, under some standardization, if needed, a $X\in L^\infty$ such that
		$E_\mathbb{Q}[-X]>\sup_{\mathbb{Q}\in cl\left(\mathcal{Q}_{\rho^\mu}\right)}E_\mathbb{Q}[-X]$. Note that $\alpha^\mu(\mathbb{Q})=\infty$ for any $\mathbb{Q}\in\mathcal{Q}\backslash \mathcal{Q}_{\rho^\mu}$ since $\mu\left(\alpha^i(\mathbb{Q})=\infty \right)>0$. Thus, $\alpha^\mu(\mathbb{Q})$ is a convex indicator function over $\mathcal{Q}_{\rho^\mu}$. We then get,
		\begin{align*}
		\alpha^{\min}_{\rho^\mu}(\mathbb{Q})&\geq E_\mathbb{Q}[-X]-\rho^\mu(X)\\
		&>\sup_{\mathbb{Q}\in cl\left(\mathcal{Q}_{\rho^\mu}\right)}E_\mathbb{Q}[-X]-\rho^\mu(X)\\&\geq \sup_{\mathbb{Q}\in \mathcal{Q}_{\rho^\mu}}E_\mathbb{Q}[-X]-\rho^\mu(X)\\
		&=\sup\limits_{\mathbb{Q}\in\mathcal{Q}}\left\lbrace E_\mathbb{Q}[-X]-\alpha_{\rho^\mu}(\mathbb{Q}) \right\rbrace-\rho^\mu(X)=0.
		\end{align*}
		This deduction is a contradiction. Then, we must have $\alpha^{min}_{\rho^\mu}(\mathbb{Q})=\infty$.
	\end{enumerate}
\end{proof}

\begin{Rmk}
	We have that $\alpha_{\rho^\mu}$, in this case, can be understood as some extension of the concept of inf-convolution for arbitrary terms represented by theoretical and integral concepts. The sum of finite risk measures leads to the inf-convolutions of their penalty functions. By extrapolating the argument, such a result is also useful regarding available conjugates for an arbitrary mixture of convex functionals. 
\end{Rmk}

\begin{Rmk}
	Note that we could consider the families $\{\mathbb{Q}^i\in\mathcal{P},\:i\in\mathcal{I}\}$ that define both $\alpha_{\rho^\mu}$ and $\mathcal{Q}_{\rho^\mu}$ by belonging to determined sets in terms of $\mu$-a.s. instead of point-wise in $\mathcal{I}$. This claim is true because the criterion is Lebesgue integral concerning each specified $\mu\in\mathcal{V}$. We choose the point-wise option in order to keep the pattern since we have not assumed fixed probability on $(\mathcal{I},\mathcal{G})$ alongside the text. Moreover, the integrals that define both $\alpha_{\rho^\mu}$ and $\mathcal{Q}_{\rho^\mu}$ may also be understood in the sense of Bochner
	integral, see \cite{Aliprantis2006} chapter 11 for details.
\end{Rmk}

\begin{Rmk}\label{rmk:lsc}
	We have that $\alpha_{\rho^\mu}$ is convex and lower semi-continuous if and only if it coincides with  $\alpha_{\rho^\mu}^{min}$. This because when $\alpha_{\rho^\mu}$ is convex and lower semi-continuous, by bi-duality regarding Legendre-Fenchel conjugates and the fact that $\rho^\mu=(\alpha_{\rho^\mu})^{*}$, we obtain $\alpha_{\rho^\mu}=(\alpha_{\rho^\mu})^{**}=(\rho^\mu)^{*}=\alpha_{\rho^\mu}^{min}$. Thus, $\alpha_{\rho^\mu}^{min}$ is the lower semi-continuous hull of $\alpha_{\rho^\mu}$ in the sense that we can obtain the first by closing the epigraph of $\alpha_{\rho^\mu}$ in $\mathcal{Q}\times\mathbb{R}_+\cup\{\infty\}$.  In the case of finite cardinality for $\mathcal{I}$, $\mathcal{Q}_{\rho^\mu}$ is closed as exposed in Proposition 2.1 of \cite{Ang2018}, which makes it possible to drop the closure on \eqref{eq:cohmu} in such situation.
\end{Rmk}

We now have the necessary conditions to enunciate the main result in this section, which represents composed risk measures in the usual framework of Theorem \ref{the:dual}.

\begin{Thm}\label{thm:dualcomp}
	Let $\rho_\mathcal{I}=\{\rho^i\colon L^\infty\rightarrow\mathbb{R},\:i\in\mathcal{I}\}$ be a collection of Fatou continuous convex risk measures, $f\colon \mathcal{X}\rightarrow\mathbb{R}$ possessing [M], [TI], [C] and [FC], and  $\rho\colon L^\infty\rightarrow\mathbb{R}$ defined as $\rho(X)=f(\rho_\mathcal{I}(X))$. Then:
	\begin{enumerate}
		\item $\rho$ can be represented as
		\begin{equation}\label{eq:compdual}
		\rho(X)=\sup\limits_{\mathbb{Q}\in\mathcal{Q}}\left\lbrace E_\mathbb{Q}[-X]-\alpha_\rho(\mathbb{Q}) \right\rbrace,\:\forall\:X\in L^\infty, 
		\end{equation} 
		where $\alpha_\rho(\mathbb{Q})=\inf\limits_{\mu\in\mathcal{V}}\left\lbrace\alpha_{\rho^\mu}(\mathbb{Q})+\gamma_f(\mu) \right\rbrace$, with $\gamma_f$ and $\alpha_{\rho^\mu}$ defined as in \eqref{eq:penf} and \eqref{eq:pennu}, respectively.
		\item If in addition to the initial hypotheses $f$ possesses [PH], then the penalty term becomes $\alpha_\rho(\mathbb{Q})=\inf\limits_{\mu\in\mathcal{V}_f}\alpha_{\rho^\mu}(\mathbb{Q})$, where $\mathcal{V}_f$ is as in Remark \ref{Rmk:dualf}.
		\item If in addition to the initial hypotheses $\rho^i$ possess, for any $i\in\mathcal{I}$, [PH], then $\alpha_\rho(\mathbb{Q})=\infty\:\forall\:\mathbb{Q}\in\mathcal{Q}\backslash\cup_{\mu\in\mathcal{V}}cl\left(\mathcal{Q}_{\rho^\mu}\right)$, where $cl$ means closure in the total variation norm.
		\item If, in addition to the initial hypotheses, we have the situations in (ii) and (iii), then the representation of $\rho$ becomes
		\begin{equation}
		\rho(X)=\sup\limits_{\mathbb{Q}\in\mathcal{Q}_\rho^{\mathcal{V}_f}}E_\mathbb{Q}[-X],\:\forall\:X\in L^\infty,
		\end{equation}
		where $\mathcal{Q}_\rho^{\mathcal{V}_f}$ is the closed convex hull of $\cup_{\mu\in\mathcal{V}_f}cl(\mathcal{Q}_{\rho^\mu})$.
	\end{enumerate}
\end{Thm}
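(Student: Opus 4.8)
The plan is to build the representation by composing the two dual representations already available and interchanging suprema. For part (i) I would start from Remark \ref{Rmk:dualf}, which gives $\rho(X)=f(R_X)=\sup_{\mu\in\mathcal{V}}\{\rho^\mu(X)-\gamma_f(\mu)\}$, and insert the representation of each $\rho^\mu$ from Theorem \ref{lmm:dualmu}(i), namely $\rho^\mu(X)=\sup_{\mathbb{Q}\in\mathcal{Q}}\{E_\mathbb{Q}[-X]-\alpha_{\rho^\mu}(\mathbb{Q})\}$. The result is an iterated supremum over $\mu$ and $\mathbb{Q}$; since the indices range independently, this equals the joint supremum, and pulling the $\mathbb{Q}$-supremum outward gives $\rho(X)=\sup_{\mathbb{Q}\in\mathcal{Q}}\{E_\mathbb{Q}[-X]-\inf_{\mu\in\mathcal{V}}(\alpha_{\rho^\mu}(\mathbb{Q})+\gamma_f(\mu))\}$, which is exactly \eqref{eq:compdual} with the stated $\alpha_\rho$. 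I would then note that $\alpha_\rho$ inherits non-negativity and values in $\mathbb{R}_+\cup\{\infty\}$ from $\alpha_{\rho^\mu}\geq 0$ and $\gamma_f\geq 0$, so it is a genuine penalty and the right-hand side is automatically a Fatou continuous convex risk measure. For part (ii), I would invoke Remark \ref{Rmk:dualf} again: when $f$ has Positive Homogeneity, $\gamma_f$ equals $0$ on $\mathcal{V}_f$ and $\infty$ off it, so only indices $\mu\in\mathcal{V}_f$ survive the infimum and $\alpha_\rho(\mathbb{Q})=\inf_{\mu\in\mathcal{V}_f}\alpha_{\rho^\mu}(\mathbb{Q})$.

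For part (iii), when every $\rho^i$ is positively homogeneous Theorem \ref{lmm:dualmu}(ii) makes each $\rho^\mu$ coherent, so its minimal penalty is the indicator of $cl(\mathcal{Q}_{\rho^\mu})$; since $\alpha_{\rho^\mu}\geq\alpha^{min}_{\rho^\mu}$ we get $\alpha_{\rho^\mu}(\mathbb{Q})=\infty$ whenever $\mathbb{Q}\notin cl(\mathcal{Q}_{\rho^\mu})$. Hence for $\mathbb{Q}\in\mathcal{Q}\setminus\cup_{\mu\in\mathcal{V}}cl(\mathcal{Q}_{\rho^\mu})$ every summand $\alpha_{\rho^\mu}(\mathbb{Q})+\gamma_f(\mu)$ is infinite and $\alpha_\rho(\mathbb{Q})=\infty$. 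For part (iv) I would combine (ii) and (iii): positive homogeneity of $f$ reduces $\rho$ to $\sup_{\mu\in\mathcal{V}_f}\rho^\mu(X)$, and substituting the coherent representation $\rho^\mu(X)=\sup_{\mathbb{Q}\in cl(\mathcal{Q}_{\rho^\mu})}E_\mathbb{Q}[-X]$ yields $\rho(X)=\sup_{\mathbb{Q}\in\cup_{\mu\in\mathcal{V}_f}cl(\mathcal{Q}_{\rho^\mu})}E_\mathbb{Q}[-X]$. Since $\mathbb{Q}\mapsto E_\mathbb{Q}[-X]$ is linear and continuous, its supremum over this union equals its supremum over the closed convex hull $\mathcal{Q}_\rho^{\mathcal{V}_f}$, giving the claimed coherent representation.

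The manipulations are essentially bookkeeping on suprema and infima; the real care lies in the legitimacy of the interchanges and in staying inside $\mathcal{Q}$. I would rely on Assumption \ref{Lmm:measure} and Lemma \ref{Lmm:probs} to ensure the integrals $\int_\mathcal{I}\mathbb{Q}^id\mu$ and $\int_\mathcal{I}\alpha^{min}_{\rho^i}(\mathbb{Q}^i)d\mu$ defining $\alpha_{\rho^\mu}$ are well-defined, and on the convention $\alpha_{\rho^\mu}(\mathbb{Q})=\infty$ for $\mathbb{Q}\in\mathcal{P}\setminus\mathcal{Q}$ to confine the whole construction to $\mathcal{Q}$. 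A secondary subtlety is part (iv): I should confirm that $\mathcal{Q}_\rho^{\mathcal{V}_f}$, built as a closed convex hull, is genuinely the dual set of the coherent $\rho$, which follows from the supremum-over-hull identity together with the uniqueness of the closed convex dual set of a coherent risk measure.
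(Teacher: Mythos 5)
Your proposal is correct and follows essentially the same route as the paper's proof: compose the dual representation of $f$ (Lemma \ref{lmm:f} via Remark \ref{Rmk:dualf}) with that of each $\rho^\mu$ (Theorem \ref{lmm:dualmu}), interchange the suprema to obtain the inf-convolution penalty $\alpha_\rho$, and then specialize using the indicator form of $\gamma_f$ and of $\alpha_{\rho^\mu}$ under the respective positive homogeneity assumptions. The only cosmetic differences are that you justify item (iii) by dominating $\alpha_{\rho^\mu}$ from below by the minimal penalty $\alpha^{min}_{\rho^\mu}$, and item (iv) by invoking the standard fact that a continuous linear functional has the same supremum over a set and over its closed convex hull, where the paper verifies both facts by direct computation.
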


\begin{proof}
	From the hypotheses and Proposition \ref{prp:prop}, we have that $\rho$ is a Fatou continuous convex risk measure.
	\begin{enumerate}
		\item From Lemma \ref{lmm:f} and Theorem \ref{lmm:dualmu} we have that
		\begin{align*}
		\rho(X)&=\sup\limits_{\mu\in\mathcal{V}}\left\lbrace\sup\limits_{\mathbb{Q}\in\mathcal{Q}}\left[ E_\mathbb{Q}[-X]-\alpha_{\rho^\mu}(\mathbb{Q}) \right]-\gamma_f(\mu) \right\rbrace\\
		&=\sup\limits_{\mathbb{Q}\in\mathcal{Q}}\left\lbrace E_\mathbb{Q}[-X]-\inf\limits_{\mu\in\mathcal{V}}\left[\alpha_{\rho^\mu}(\mathbb{Q}) +\gamma_f(\mu)\right] \right\rbrace\\
		&=\sup\limits_{\mathbb{Q}\in\mathcal{Q}}\left\lbrace E_\mathbb{Q}[-X]-\alpha_\rho(\mathbb{Q}) \right\rbrace.
		\end{align*}
		
		\item If $f$ possesses [PH], then $\gamma_f$ assumes value $0$ in $\mathcal{V}_f$ and $\infty$ otherwise. Thus, we get \[\alpha_\rho(\mathbb{Q})=\inf\limits_{\mu\in\mathcal{V}}\left\lbrace\alpha_{\rho^\mu}(\mathbb{Q})+\gamma_f(\mu) \right\rbrace=\inf\limits_{\mu\in\mathcal{V}_f}\alpha_{\rho^\mu}(\mathbb{Q}).\]
		\item When each element of $\rho_\mathcal{I}$ fulfills [PH], we have that $\alpha_{\rho^\mu}(\mathbb{Q})=\infty\:\forall\:\mu\in\mathcal{V}$ for any $\mathbb{Q}\in\mathcal{Q}\backslash\cup_{\mu\in\mathcal{V}}cl(\mathcal{Q}_{\rho^\mu})$. One gets the claim by adding the non-negative term $\gamma_f(\mu)$ and taking the infimum over $\mathcal{V}$.
		\item In this context, the generated $\rho$ is coherent from Theorem \ref{prp:prop}. Moreover, in this case from Lemma \ref{lmm:f} and Proposition \ref{lmm:dualmu} together to items (ii) and (iii) we have that
		\begin{align*}
		\rho(X)&=\sup\limits_{\mu\in\mathcal{V}_f}\sup\limits_{\mathbb{Q}\in cl(\mathcal{Q}_{\rho^\mu})}E_\mathbb{Q}[-X]\\
		&=\sup\limits_{\mathbb{Q}\in\cup_{\mu\in\mathcal{V}_f}cl(\mathcal{Q}_{\rho^\mu})}E_\mathbb{Q}[-X]\\
		&=\sup\limits_{\mathbb{Q}\in\mathcal{Q}_\rho^{\mathcal{V}_f}}E_\mathbb{Q}[-X].
		\end{align*}
		In order to verify that the supremum is not altered by considering the closed convex hull, let $\mathbb{Q}_1,\mathbb{Q}_2\in\cup_{\mu\in\mathcal{V}_f}cl(\mathcal{Q}_{\rho^\mu})$ and $\mathbb{Q}=\lambda \mathbb{Q}_1+(1-\lambda)\mathbb{Q}_2,\:\lambda\in[0,1]$. Then \[E_{\mathbb{Q}}[-X]\leq\max(E_{\mathbb{Q}_1}[-X],E_{\mathbb{Q}_2}[-X])\leq\sup\limits_{\mathbb{Q}\in\cup_{\mu\in\mathcal{V}_f}cl(\mathcal{Q}_{\rho^\mu})}E_\mathbb{Q}[-X],\]
		thus convex combinations do not alter the supremum. For closure, the deduction is quite similar to that used in the proof of Theorem \ref{lmm:dualmu}. 
	\end{enumerate}
\end{proof}

\begin{Rmk}
	Note that when $f=f^\mu$, we recover the result in Theorem \ref{lmm:dualmu}. Moreover,  $\mathcal{Q}_\rho^{\mathcal{V}_f}\subseteq\mathcal{Q}_{\rho^{WC}}$ since $f\leq f^{WC}$ for any bounded combination $f$. Furthermore, when $\alpha_{\rho^\mu}$ is convex and lower semi-continuous, we have that $\alpha_{\rho}$ coincides with the minimal penalty term because
	\begin{align*}
	\alpha_\rho^{min}(\mathbb{Q})&=\sup\limits_{X\in L^\infty}\left\lbrace E_\mathbb{Q}[-X]-f(R_X) \right\rbrace \\
	&=\sup\limits_{X\in L^\infty}\left\lbrace E_\mathbb{Q}[-X]-\sup\limits_{\mu\in\mathcal{V}}\left\lbrace\rho^{\mu}(X)-\gamma_f(\mu) \right\rbrace  \right\rbrace\\
	&=\inf\limits_{\mu\in\mathcal{V}}\left\lbrace \gamma_f(\mu)+\sup\limits_{X\in L^\infty}\left\lbrace E_\mathbb{Q}[-X]-\rho^\mu(X) \right\rbrace \right\rbrace 	\\
	&=\alpha_{\rho}(\mathbb{Q}).
	\end{align*}
	Hence, the reasoning in Remark \ref{rmk:lsc} is also valid in here.
\end{Rmk}

\begin{Rmk}
	Under [CS] or [R], the supremum over $\mathcal{Q}$ can be replaced, respectively, by sub-probabilities (measures on $(\Omega,\mathcal{F})$ with $\mathbb{Q}(\Omega)\leq 1$) or probabilities equivalent to $\mathbb{P}$.  Under [QC] for $f$ and dropping its [C] and [TI] we get the representation $\rho(X)=\sup_{\mu\in\mathcal{V}}R_f(\rho^\mu(X),\mu)$, where $R_f\colon\mathbb{R}\times\mathcal{V}\rightarrow\mathbb{R}$ is defined as $R_f(x,\mu)=\inf\left\lbrace f(R)\colon \int_{\mathcal{I}}Rd\mu=x\right\rbrace $. See the papers in Remark \ref{rmk:prop} for details.
\end{Rmk}

Regarding the specific case of $\rho^{WC}$, Proposition 9 in \cite{Follmer2002} states that it can be represented by, the non necessarily convex, $\alpha_{\rho^{WC}}(\mathbb{Q})=\inf\limits_{i\in\mathcal{I}}\alpha^{min}_{\rho^i}(\mathbb{Q}),\:\forall\:\mathbb{Q}\in\mathcal{Q}$. Under coherence, Theorem 2.1 of \cite{Ang2018} claims that $\mathcal{Q}_{\rho^{WC}}=conv(\cup_{i=i}^n\mathcal{Q}_{\rho^i})$ when $\mathcal{I}$ is finite with cardinality $n$. We now expose a result that states these facts under our approach, which is more general.

\begin{Prp}\label{prp:dualWC2}
	Let $\{\rho^i: L^\infty\rightarrow \mathbb{R},\:i\in\mathcal{I}\}$ be a collection of Fatou continuous convex risk measures, and  $\rho^{WC}\colon L^\infty\rightarrow\mathbb{R}$ defined as  in \eqref{Ex:WC}. Then:
	\begin{enumerate}
		\item  $\alpha_{\rho^{WC}}(\mathbb{Q})=\inf\limits_{\mu\in\mathcal{V}}\alpha_{\rho^\mu}(\mathbb{Q})=\inf\limits_{i\in\mathcal{I}}\alpha^{min}_{\rho^i}(\mathbb{Q}),\:\forall\:\mathbb{Q}\in\mathcal{Q}$.
		\item  If in addition to the initial hypotheses $\rho^i$ possess, for any $i\in\mathcal{I}$, [PH], then $\mathcal{Q}_{\rho^{WC}}=\mathcal{Q}_{\rho}^{\mathcal{V}}$, which is the closed convex hull of $\cup_{i\in\mathcal{I}}\mathcal{Q}_{\rho^i}$. 
	\end{enumerate}	
	
\end{Prp}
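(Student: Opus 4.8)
The plan is to handle the two equalities of part (i) separately, and then derive part (ii) from the coherent specialization of the machinery already built. For the first equality $\alpha_{\rho^{WC}}(\mathbb{Q})=\inf_{\mu\in\mathcal{V}}\alpha_{\rho^\mu}(\mathbb{Q})$, I would simply invoke Theorem \ref{thm:dualcomp}: by Proposition \ref{prp:fmu} the combination $f^{WC}$ is Positively Homogeneous, so part (ii) of that theorem gives $\alpha_{\rho^{WC}}(\mathbb{Q})=\inf_{\mu\in\mathcal{V}_{f^{WC}}}\alpha_{\rho^\mu}(\mathbb{Q})$; and since $f^{WC}(R)=\sup_{i}R(i)\ge\int_{\mathcal{I}}R\,d\mu$ for every $\mu\in\mathcal{V}$, Remark \ref{Rmk:dualf} yields $\mathcal{V}_{f^{WC}}=\mathcal{V}$, which is exactly the first equality. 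This step is routine.

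For the second equality $\inf_{\mu\in\mathcal{V}}\alpha_{\rho^\mu}(\mathbb{Q})=\inf_{i\in\mathcal{I}}\alpha^{min}_{\rho^i}(\mathbb{Q})$ I would prove two inequalities. The direction ``$\le$'' I would obtain from the Dirac measures $\delta_i\in\mathcal{V}$: since $\rho^{\delta_i}=\rho^i$ and the constraint $\int_{\mathcal{I}}\mathbb{Q}^j\,d\delta_i=\mathbb{Q}$ in \eqref{eq:pennu} forces $\mathbb{Q}^i=\mathbb{Q}$, one gets $\alpha_{\rho^{\delta_i}}(\mathbb{Q})=\alpha^{min}_{\rho^i}(\mathbb{Q})$ for each $i$, whence $\inf_{\mu}\alpha_{\rho^\mu}(\mathbb{Q})\le\inf_{i}\alpha^{min}_{\rho^i}(\mathbb{Q})$. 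As corroboration I would note the direct interchange of suprema $\rho^{WC}(X)=\sup_{i}\sup_{\mathbb{Q}}\{E_\mathbb{Q}[-X]-\alpha^{min}_{\rho^i}(\mathbb{Q})\}=\sup_{\mathbb{Q}}\{E_\mathbb{Q}[-X]-\inf_{i}\alpha^{min}_{\rho^i}(\mathbb{Q})\}$, which exhibits $\inf_{i}\alpha^{min}_{\rho^i}$ as a (not necessarily minimal) representing penalty for $\rho^{WC}$.

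The reverse direction ``$\ge$'', namely $\inf_{\mu}\alpha_{\rho^\mu}(\mathbb{Q})\ge\inf_{i}\alpha^{min}_{\rho^i}(\mathbb{Q})$, is where I expect the genuine difficulty. It amounts to showing that no true $\mu$-mixture can undercut the pointwise penalty, i.e. that for every $\mu\in\mathcal{V}$ and every admissible family $\{\mathbb{Q}^i\}$ with $\int_{\mathcal{I}}\mathbb{Q}^i\,d\mu=\mathbb{Q}$ one has $\int_{\mathcal{I}}\alpha^{min}_{\rho^i}(\mathbb{Q}^i)\,d\mu\ge\inf_{j}\alpha^{min}_{\rho^j}(\mathbb{Q})$. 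The natural attempt is to pull the integral inside using convexity of each $\alpha^{min}_{\rho^i}$, but this is precisely the subtle point: the selection $i\mapsto\mathbb{Q}^i$ may draw from the dual regions of different constituents, so a Jensen-type estimate does not obviously collapse to a single index. In spirit the inf-convolution structure of \eqref{eq:pennu} makes $\inf_{\mu}\alpha_{\rho^\mu}$ a convexification of $\inf_{i}\alpha^{min}_{\rho^i}$, and the real content of this step is that the convexification leaves the function unchanged; establishing this identity --- or isolating the hypotheses under which it holds --- is the crux where I would concentrate the technical effort, and it is also where I would most carefully cross-check consistency against the dual-set description in part (ii).

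For part (ii) I would specialize to the coherent case. With each $\rho^i$ Positively Homogeneous, $\rho^{WC}$ is coherent by Proposition \ref{prp:prop}, and Theorem \ref{thm:dualcomp}(iv) represents it as $\sup_{\mathbb{Q}\in\mathcal{Q}_\rho^{\mathcal{V}}}E_\mathbb{Q}[-X]$ with $\mathcal{Q}_\rho^{\mathcal{V}}=cl(conv(\cup_{\mu\in\mathcal{V}}cl(\mathcal{Q}_{\rho^\mu})))$. Independently, $\rho^{WC}(X)=\sup_{i}\sup_{\mathbb{Q}\in\mathcal{Q}_{\rho^i}}E_\mathbb{Q}[-X]=\sup_{\mathbb{Q}\in\cup_i\mathcal{Q}_{\rho^i}}E_\mathbb{Q}[-X]$, and because $\mathbb{Q}\mapsto E_\mathbb{Q}[-X]$ is affine the supremum is unchanged on the closed convex hull; the uniqueness of the closed convex dual set in Theorem \ref{the:dual}(2) then gives $\mathcal{Q}_{\rho^{WC}}=cl(conv(\cup_i\mathcal{Q}_{\rho^i}))$. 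It remains to match the two hulls: the inclusion $\cup_i\mathcal{Q}_{\rho^i}\subseteq\cup_{\mu}cl(\mathcal{Q}_{\rho^\mu})$ holds via the Diracs $\delta_i$, while conversely each $cl(\mathcal{Q}_{\rho^\mu})\subseteq cl(conv(\cup_i\mathcal{Q}_{\rho^i}))$ since elements of $\mathcal{Q}_{\rho^\mu}$ are $\mu$-barycentres of selections $\mathbb{Q}^i\in\mathcal{Q}_{\rho^i}$. Passing to closed convex hulls yields $\mathcal{Q}_\rho^{\mathcal{V}}=cl(conv(\cup_i\mathcal{Q}_{\rho^i}))=\mathcal{Q}_{\rho^{WC}}$, as claimed.
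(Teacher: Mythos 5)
Your handling of the first equality (via Theorem \ref{thm:dualcomp}(ii) and $\mathcal{V}_{f^{WC}}=\mathcal{V}$), of the inequality $\inf_{\mu\in\mathcal{V}}\alpha_{\rho^\mu}(\mathbb{Q})\leq\inf_{i\in\mathcal{I}}\alpha^{min}_{\rho^i}(\mathbb{Q})$ via Dirac measures, and of part (ii) are all sound; in fact your part (ii) is more careful than the paper's own proof, which asserts the set identity $\cup_{i\in\mathcal{I}}\mathcal{Q}_{\rho^i}=\cup_{\mu\in\mathcal{V}}cl(\mathcal{Q}_{\rho^\mu})$, of which only the inclusion ``$\subseteq$'' holds in general, whereas you correctly pass through barycentres and match only the closed convex hulls, which is all that is needed. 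The gap is the one you flagged yourself: you never prove $\inf_{\mu\in\mathcal{V}}\alpha_{\rho^\mu}(\mathbb{Q})\geq\inf_{i\in\mathcal{I}}\alpha^{min}_{\rho^i}(\mathbb{Q})$, so as a proof of (i) your text is incomplete. For comparison, the paper's proof disposes of exactly this step with the unproved sentence ``Recall that $\inf_{i\in\mathcal{I}}\alpha^{min}_{\rho^i}(\mathbb{Q})\leq\alpha_{\rho^\mu}(\mathbb{Q})$''.

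Your instinct that this is where the difficulty lies is correct, and more is true: that inequality, and hence the pointwise identity claimed in (i), is false in general, so the step cannot be repaired. Take $\mathcal{I}=\{1,2\}$ and $\rho^i(X)=E_{\mathbb{Q}_i}[-X]$ with $\mathbb{Q}_1,\mathbb{Q}_2\in\mathcal{Q}$, $\mathbb{Q}_1\neq\mathbb{Q}_2$. Then $\alpha^{min}_{\rho^i}$ is the indicator of $\{\mathbb{Q}_i\}$ (zero there, $\infty$ elsewhere), so $\inf_i\alpha^{min}_{\rho^i}(\bar{\mathbb{Q}})=\infty$ at $\bar{\mathbb{Q}}=\tfrac{1}{2}(\mathbb{Q}_1+\mathbb{Q}_2)$; but for $\mu=(\tfrac{1}{2},\tfrac{1}{2})$ the admissible selection $\mathbb{Q}^1=\mathbb{Q}_1$, $\mathbb{Q}^2=\mathbb{Q}_2$ in \eqref{eq:pennu} gives $\alpha_{\rho^\mu}(\bar{\mathbb{Q}})=0$. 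Here $\inf_\mu\alpha_{\rho^\mu}$ is the indicator of the segment joining $\mathbb{Q}_1$ and $\mathbb{Q}_2$ (indeed the minimal penalty of $\rho^{WC}$), strictly below the indicator of the two endpoints. What survives, and what your Dirac argument together with your interchange-of-suprema observation actually proves, is the weaker, correct statement: $\inf_\mu\alpha_{\rho^\mu}$ and $\inf_i\alpha^{min}_{\rho^i}$ are both representing penalties for $\rho^{WC}$, i.e.\ they induce the same supremum in the dual representation, without being equal pointwise. Part (ii) is unaffected by this, since it only requires equality of the induced closed convex dual sets, which your hull-matching argument establishes.
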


\begin{proof}
	From Propositions \ref{prp:fmu} and \ref{prp:prop}, we have that $\rho^{WC}$ is a Fatou continuous convex risk measure when all $\rho^i$ also are. Moreover, from the fact that $|\rho^{WC}(X)|\leq\lVert X\rVert_\infty<\infty$  we have that $\rho^{WC}$ takes only finite values.
	\begin{enumerate}
		\item For fixed $\mathbb{Q}\in\mathcal{P}$, we have that for any $\epsilon>0$, there is $j\in\mathcal{I}$ such that \[\inf\limits_{i\in\mathcal{I}}\alpha^{min}_{\rho^i}(\mathbb{Q})\leq\alpha_{\rho^j}^{min}(\mathbb{Q})\leq \inf\limits_{i\in\mathcal{I}}\alpha^{min}_{\rho^i}(\mathbb{Q})+\epsilon.\]	Recall that $\inf\limits_{i\in\mathcal{I}}\alpha^{min}_{\rho^i}(\mathbb{Q})\leq\alpha_{\rho^\mu}(\mathbb{Q})\leq\int_{\mathcal{I}}\alpha_{\rho^i}^{min}(\mathbb{Q})d\mu$ for any $\mu\in\mathcal{V}$. Then it is true that for any $\epsilon>0$, there is $\mu\in\mathcal{V}$ such that \[\inf\limits_{i\in\mathcal{I}}\alpha^{min}_{\rho^i}(\mathbb{Q})\leq\alpha_{\rho^\mu}(\mathbb{Q})\leq \inf\limits_{i\in\mathcal{I}}\alpha^{min}_{\rho^i}(\mathbb{Q})+\epsilon.\]	
		By taking the infimum over $\mathcal{V}$ and since $\epsilon$ was taken arbitrarily, we get that $\alpha_{\rho^{WC}}(\mathbb{Q})=\inf\limits_{i\in\mathcal{I}}\alpha^{min}_{\rho^i}(\mathbb{Q})$.
		\item 	From Propositions \ref{prp:fmu} and \ref{prp:prop} we have that $\rho^{WC}$ is a Fatou continuous coherent risk measure when all $\rho^i$ also are. Thus, in light of Theorem \ref{the:dual}, it has a dual representation. We then have
		\[\rho^{WC}(X)=\sup\limits_{i\in\mathcal{I}}\sup\limits_{\mathbb{Q}\in\mathcal{Q}_{\rho^i}}E_\mathbb{Q}\left[-X\right]=\sup\limits_{\mathbb{Q}\in\cup_{i\in\mathcal{I}}\mathcal{Q}_{\rho^i}}E_\mathbb{Q}[-X].\]
		The fact that supremum is not altered by considering the closed convex hull follows similar steps as those in the proof of Theorem \ref{thm:dualcomp}. We have that $\cup_{i\in\mathcal{I}}\mathcal{Q}_{\rho^i}$ is non-empty because every $\mathcal{Q}_{\rho^i}$ contains at least one element. 
		Hence,  $\mathcal{Q}_{\rho^{WC}}$ coincides to the closed convex hull of $\cup_{i\in\mathcal{I}}\mathcal{Q}_{\rho^i}$. Regarding the equivalence with $\mathcal{Q}_{\rho}^{\mathcal{V}}$, note that for any $i\in\mathcal{I}$ we have that $\mathbb{Q}\in\mathcal{Q}_{\rho^i}$ if and only if $\mathbb{Q}\in cl\mathcal{Q}_{\rho^{\delta_i}}$, where $\delta_i\in\mathcal{V}$ is defined as $\delta_i(A)=1_{A}(i),\:\forall\:A\in\mathcal{G}$. Thus, we get that $\cup_{i\in\mathcal{I}}\mathcal{Q}_{\rho^i}\subseteq\cup_{\mu\in\mathcal{V}}cl(\mathcal{Q}_{\rho^\mu})$. By considering closed convex hulls we obtain $\mathcal{Q}_{\rho^{WC}}\subseteq\mathcal{Q}_{\rho}^{\mathcal{V}}$. For the converse relation note that if $\mathbb{Q}\in \cup_{\mu\in\mathcal{V}}cl(\mathcal{Q}_{\rho^\mu})$, then $\alpha_{\rho^{WC}}^{min}(\mathbb{Q})\leq\inf\limits_{\mu\in\mathcal{V}}\alpha_{\rho^\mu}(\mathbb{Q})=\inf\limits_{i\in\mathcal{I}}\alpha^{min}_{\rho^i}(\mathbb{Q})=0$. Hence, $\mathbb{Q}\in \mathcal{Q}_{\rho^{WC}}$ as desired.
	\end{enumerate}	
\end{proof}

\subsection{Law invariant case}

Under [LI] of the components in $\rho_\mathcal{I}$, the generated $\rho$ is representable in light of those formulations in Theorem \ref{thm:dual2}. We begin with an auxiliary result for the representation when $\rho^\mu$ is law invariant. The next Proposition follows in this direction.

\begin{Prp}\label{lmm:dualmu2}
	Let $\rho_\mathcal{I}=\{\rho^i\colon L^\infty\rightarrow\mathbb{R},\:i\in\mathcal{I}\}$ be a collection of law invariant convex risk measures and $\rho^{\mu}: L^\infty\rightarrow \mathbb{R}$ defined as in \eqref{eq:mu}. Then:
	\begin{enumerate}
		\item  $\rho^\mu$ can be represented as:
		\begin{equation}
		\rho^\mu(X)=\sup\limits_{m\in\mathcal{M}}\left\lbrace \int_{(0,1]}ES^\alpha(X)dm-\beta_{\rho^\mu}(m)\right\rbrace,\:\forall\:X\in L^\infty,
		\end{equation}
		with convex  $\beta_{\rho^\mu}\colon\mathcal{M}\rightarrow\mathbb{R}_+\cup\{\infty\}$, defined as 
		\begin{equation}\label{eq:pennu2}
		\beta_{\rho^\mu}(m)=\inf\left\lbrace \begin{cases*}\int_\mathcal{I}\beta_{\rho^i}^{min}(m^i)d\mu,\text{if}\:i\to\beta^{min}_{\rho^i}\left(m^i\right)\in\mathcal{G}
		\\
		\infty,\:\text{otherwise}\end{cases*} 
		\colon \int_{\mathcal{I}}m^id\mu=m,\:m^i\in\mathcal{M}\:\forall\:i\in\mathcal{I}\right\rbrace .
		\end{equation}

		\item	If in addition $\rho^i$ fulfills, for every $i\in\mathcal{I}$, [PH], then the representation is 
		\begin{equation}\label{eq:cohnu2}
		\rho^\mu(X)=\sup\limits_{m\in cl(\mathcal{M}_{\rho^\mu})} \int_{(0,1]}ES^\alpha(X)dm,\:\forall\:X\in L^\infty,
		\end{equation}
		with $\mathcal{M}_{\rho^\mu}=\left\lbrace m\in\mathcal{M}\colon m=\int_{\mathcal{I}}m^id\mu,m^i\in\mathcal{M}_{\rho^i}\:\forall\:i\in\mathcal{I}\right\rbrace$ non-empty and convex, where $cl$ means the closure in total variation norm.
		\item  If $\rho^i$ also is, for every $i\in\mathcal{I}$, comonotone, then the representation is 
		\begin{equation}\label{eq:rhomucom}
		\rho^\mu(X)= \int_{(0,1]}ES^\alpha(X)dm,\:\forall\:X\in L^\infty,
		\end{equation}
		where $m\in cl(\mathcal{M}_{\rho^\mu})$.	
	\end{enumerate}
\end{Prp}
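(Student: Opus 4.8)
The plan is to run the argument of Theorem \ref{lmm:dualmu} inside the Kusuoka framework of Theorem \ref{thm:dual2} rather than the dual framework of Theorem \ref{the:dual}. First I would record that $\rho^\mu$ is a law invariant convex risk measure: law invariance is Proposition \ref{prp:prop}(v), while convexity follows from Proposition \ref{prp:fmu}(2) (which gives $f^\mu$ Monotonicity and Convexity) together with Proposition \ref{prp:prop}(iii). By Theorem \ref{thm:fatou} it is then automatically Fatou continuous, so Theorem \ref{thm:dual2}(1) guarantees a Kusuoka representation; the content of the Proposition is to identify the penalty $\beta_{\rho^\mu}$ explicitly.

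For part (i), I would insert $\rho^i(X)=\sup_{m\in\mathcal{M}}\{\int_{(0,1]}ES^\alpha(X)dm-\beta^{min}_{\rho^i}(m)\}$ into $\rho^\mu(X)=\int_\mathcal{I}\rho^i(X)d\mu$ and interchange the $\mu$-integral with the supremum via Lemma \ref{lem:int}, taking $\mathcal{Y}=\mathcal{Y}_i=\mathcal{M}$ and $h^i(m)=\int_{(0,1]}ES^\alpha(X)dm-\beta^{min}_{\rho^i}(m)$; the two measurability hypotheses of that lemma are supplied by Lemma \ref{Lmm:measure2}(i) and (iii) for the integrand and by Assumption \ref{asp:measure} for $\sup_m h^i(m)=\rho^i(X)$. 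This yields $\rho^\mu(X)=\sup_{\{m^i\in\mathcal{M}\}}\int_\mathcal{I}\big(\int_{(0,1]}ES^\alpha(X)dm^i-\beta^{min}_{\rho^i}(m^i)\big)d\mu$. I would then need the $(0,1]$-analogue of Lemma \ref{Lmm:probs}: that $m=\int_\mathcal{I}m^id\mu$, defined by $m(B)=\int_\mathcal{I}m^i(B)d\mu$, is a probability measure on $(0,1]$ (well-defined because $i\to m^i(B)$ is $\mathcal{G}$-measurable by Lemma \ref{Lmm:measure2}(ii)) and that $\int_{(0,1]}ES^\alpha(X)dm=\int_\mathcal{I}\int_{(0,1]}ES^\alpha(X)dm^id\mu$. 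Grouping the supremum by the value of the mixture $m$ then produces $\sup_{m\in\mathcal{M}}\{\int_{(0,1]}ES^\alpha(X)dm-\beta_{\rho^\mu}(m)\}$ with $\beta_{\rho^\mu}$ as in \eqref{eq:pennu2}. Well-definedness of the infimum, non-negativity, and convexity of $\beta_{\rho^\mu}$ are then shown exactly as for $\alpha_{\rho^\mu}$ in Theorem \ref{lmm:dualmu}, using auxiliary sets built from the constraint $\int_\mathcal{I}m^id\mu=m_j$.

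For part (ii) I would instead start from the coherent representation $\rho^i(X)=\sup_{m\in\mathcal{M}_{\rho^i}}\int_{(0,1]}ES^\alpha(X)dm$ of Theorem \ref{thm:dual2}(2), apply Lemma \ref{lem:int} with $\mathcal{Y}_i=\mathcal{M}_{\rho^i}$ (where now $h^i$ is genuinely bounded), and use the same interchange to obtain $\sup_{m\in\mathcal{M}_{\rho^\mu}}\int_{(0,1]}ES^\alpha(X)dm$; non-emptiness and convexity of $\mathcal{M}_{\rho^\mu}$ follow from those of each $\mathcal{M}_{\rho^i}$ via $\int_\mathcal{I}(\lambda m^i_1+(1-\lambda)m^i_2)d\mu=\lambda m_1+(1-\lambda)m_2$, and that passing to $cl(\mathcal{M}_{\rho^\mu})$ leaves the supremum unchanged is the same total-variation limiting argument used in Theorem \ref{lmm:dualmu}(2). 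Part (iii) is the cleanest: by Proposition \ref{prp:prop}(vi) together with Additivity of $f^\mu$ from Proposition \ref{prp:fmu}(2), $\rho^\mu$ inherits Co-monotonic Additivity, so each $\mathcal{M}_{\rho^i}$ is a singleton $\{m^i\}$ by Theorem \ref{thm:dual2}(3); no supremum interchange is needed and mere linearity of the integral gives $\rho^\mu(X)=\int_\mathcal{I}\int_{(0,1]}ES^\alpha(X)dm^id\mu=\int_{(0,1]}ES^\alpha(X)dm$ with $m=\int_\mathcal{I}m^id\mu\in cl(\mathcal{M}_{\rho^\mu})$.

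The main obstacle I anticipate is the Fubini-type interchange $\int_{(0,1]}ES^\alpha(X)d\big(\int_\mathcal{I}m^id\mu\big)=\int_\mathcal{I}\int_{(0,1]}ES^\alpha(X)dm^id\mu$, i.e.\ establishing the $(0,1]$-analogue of Lemma \ref{Lmm:probs}; this must be proved by the approximation route (indicators, then simple functions, then bounded measurable integrands by dominated convergence, using that $\alpha\to ES^\alpha(X)$ is bounded for $X\in L^\infty$), and it relies critically on the measurability in Lemma \ref{Lmm:measure2}(ii). A secondary technical point, confined to part (i), is the boundedness requirement in Lemma \ref{lem:int}: since $\beta^{min}_{\rho^i}(m)$ may be $+\infty$, the functionals $h^i$ are only bounded above, so the interchange should be justified on the effective domains $\{m:\beta^{min}_{\rho^i}(m)<\infty\}$ (equivalently, by restricting $\mathcal{Y}_i$), precisely as in the dual-space proof of Theorem \ref{lmm:dualmu}.
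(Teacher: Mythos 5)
Your proposal is correct and follows essentially the same route as the paper: the paper's own proof consists precisely of rerunning Theorem \ref{lmm:dualmu} with the linear map $m^i\mapsto\int_{(0,1]}ES^\alpha(X)dm^i$ playing the role of $\mathbb{Q}^i\mapsto E_{\mathbb{Q}^i}[-X]$, with measurability supplied by Lemma \ref{Lmm:measure2}, and with item (iii) justified by the attainment of the supremum in \eqref{eq:dual2com} for each $\rho^i$ --- exactly your plan, including the two technical points (the $(0,1]$-analogue of Lemma \ref{Lmm:probs} and the boundedness caveat in Lemma \ref{lem:int}) that the paper leaves implicit. One small imprecision: in (iii) you say each $\mathcal{M}_{\rho^i}$ ``is a singleton,'' whereas Theorem \ref{thm:dual2}(3) only provides a single representing $m^i\in\mathcal{M}_{\rho^i}$ (the set itself need not be a singleton); since a fixed representing $m^i$ per $i$ is all your argument actually uses, nothing breaks.
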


\begin{proof}
	From the hypotheses and Proposition \ref{prp:prop} together to Theorem \ref{thm:fatou}, we have that $\rho^\mu$ is a law invariant convex risk measure. Theorem \ref{thm:fatou} assures its [FC]. 	From Theorems \ref{the:dual} and \ref{thm:dual2}, we have for any $m\in\mathcal{M}$ there is $\mathbb{Q}^\prime$ such that \[\beta_{\rho^i}^{min}(m)=\sup\left\lbrace\alpha_{\rho^i}^{min}(\mathbb{Q})\colon \frac{d\mathbb{Q}}{d\mathbb{P}}\sim \frac{d\mathbb{Q}^\prime}{d\mathbb{P}},\int_{(u,1]}\frac{1}{v}dm
	=F^{-1}_{\frac{d\mathbb{Q}^\prime}{d\mathbb{P}}}(1-u),\mathbb{Q}^\prime\in\mathcal{Q} \right\rbrace=\alpha_{\rho^i}^{min}(\mathbb{Q}^\prime).\] Hence, from Assumption \ref{Lmm:measure} the maps $i\rightarrow\beta^{min}_{\rho^{i}}(m)=\alpha_{\rho^i}^{min}(\mathbb{Q}^\prime)$ are $\mathcal{G}$-measurable for any $m\in\mathcal{M}$. From that, the proof follows similar steps to those of Theorem \ref{lmm:dualmu} with $m^i\rightarrow\int_{(0,1]}ES^\alpha(X)dm^i$ linear and playing the role of $\mathbb{Q}^i\rightarrow E_{\mathbb{Q}^i}[-X]$. For the comonotonic case in (iii), the result is due to the supremum in \eqref{eq:dual2com} being attained for each $\rho^i$.	
\end{proof}

\begin{Rmk}\label{rmk:spectralmu}
	The representation in item (iii) on \eqref{eq:rhomucom} is equivalent to the spectral one as \begin{equation}\label{eq:spectralmu}
	\rho^\mu(X)=\int_{0}^{1}VaR^\alpha(X)\phi^\mu(\alpha)d\alpha,
	\end{equation} where $\phi^\mu(\alpha)=\int_{\mathcal{I}}\phi^i(\alpha)d\mu$, and $\phi^i\colon[0,1]\rightarrow[0,1]$ is as in Example \ref{ex:spec} for any $i\in\mathcal{I}$. The map $i\rightarrow\phi^i(\alpha)$ is $\mathcal{G}$-measurable to any $\alpha\in[0,1]$. To verify this claim, note that for each $i\in\mathcal{I}$ it is true that $\phi^i(\alpha)=\int_{(\alpha,1]}\frac{1}{s}dm^i(s),\:m^i\in\mathcal{M}$. Then $\frac{d\nu^i}{dm^i}=\frac{1}{s}$ defines a finite measure on $(0,1]$.  Thus, from definition of $\mathcal{M}_{\rho^\mu}$, $i\rightarrow\nu^i(\alpha,1]=\phi^i(\alpha)$ is $\mathcal{G}$-measurable for any $\alpha\in[0,1]$.
\end{Rmk}

We can now propose a result for the dual representation under [LI].	The next Corollary exposes such content.

\begin{Crl}\label{cor:dualcomp2}
	Let $\rho_\mathcal{I}=\{\rho^i\colon L^\infty\rightarrow\mathbb{R},\:i\in\mathcal{I}\}$ be a collection of law invariant convex risk measures, $f\colon \mathcal{X}\rightarrow\mathbb{R}$ possessing [M], [TI], [C] and [FC], and  $\rho\colon L^\infty\rightarrow\mathbb{R}$ defined as $\rho(X)=f(\rho_\mathcal{I}(X))$. Then:
	\begin{enumerate}
		\item $\rho$ can be represented as
		\begin{equation}\label{eq:compdual2}
		\rho(X)=\sup\limits_{m\in\mathcal{M}}\left\lbrace \int_{(0,1]}ES^\alpha(X)dm-\beta_\rho(m) \right\rbrace,\:\forall\:X\in L^\infty, 
		\end{equation} 
		where $\beta_\rho(m)=\inf\limits_{\mu\in\mathcal{V}}\left\lbrace\beta_{\rho^\mu}(m)+\gamma_f(\mu) \right\rbrace$, with $\gamma_f$ and $\beta_{\rho^\mu}$ defined as in \eqref{eq:penf} and \eqref{eq:pennu2}, respectively.
		\item If in addition to the initial hypotheses $f$ possesses [PH], then the penalty term becomes $\beta_\rho(m)=\inf\limits_{\mu\in\mathcal{V}_f}\beta_{\rho^\mu}(m)$, where $\mathcal{V}_f$ is as in Remark \ref{Rmk:dualf}.
		\item If in addition to the initial hypotheses $\rho^i$ possess, for any $i\in\mathcal{I}$, [PH], then $\beta_\rho(m)=\infty,\:\forall\:m\in\mathcal{M}\backslash\cup_{\mu\in\mathcal{V}}cl(\mathcal{M}_{\rho^\mu})$, where $cl$ means closure in the total variation norm.
		\item If, in addition to the initial hypotheses, we have the situations in (ii) and (iii), then the representation of $\rho$ becomes
		\begin{equation}\label{eqcomdual2b}
		\rho(X)=\sup\limits_{m\in\mathcal{M}_\rho^{\mathcal{V}_f}}\int_{(0,1]}ES^\alpha(X)dm,\:\forall\:X\in L^\infty,
		\end{equation}
		where $\mathcal{M}_\rho^{\mathcal{V}_f}$ is the closed convex hull of $\cup_{\mu\in\mathcal{V}_f}cl(\mathcal{M}_{\rho^\mu})$.
		\item If in addition to the initial hypotheses $\rho^i$ possess, for any $i\in\mathcal{I}$, [CA], then $\beta_\rho(m)=\infty,\:\forall\:m\in\mathcal{M}\backslash\cup_{\mu\in\mathcal{V}}\{m^\mu_c\}$, where \[m^\mu_c=\argmax\limits_{m\in cl(\mathcal{M}_{\rho^\mu})}\int_{(0,1]}ES^\alpha(X)dm,\:\forall\:\mu\in\mathcal{V}.
		\]
		\item If, in addition to the initial hypotheses, we have (ii) and (v), then the representation of $\rho$ becomes
		\begin{equation}\label{eqcomdual2c}
		\rho(X)=\sup\limits_{m\in\mathcal{M}_{\rho,c}^{\mathcal{V}_f}}\int_{(0,1]}ES^\alpha(X)dm,\:\forall\:X\in L^\infty,
		\end{equation}
		where $\mathcal{M}_{\rho,c}^{\mathcal{V}_f}$ is the closed convex hull of $\cup_{\mu\in\mathcal{V}_f}\{m^\mu_c\}$.
	\end{enumerate}
\end{Crl}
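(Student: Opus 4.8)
The plan is to run exactly the machinery of Theorem \ref{thm:dualcomp}, but in the Kusuoka/ES dual of Theorem \ref{thm:dual2} instead of the $E_\mathbb{Q}[-X]$ dual, using Proposition \ref{lmm:dualmu2} in place of Theorem \ref{lmm:dualmu}. First I would note that by Proposition \ref{prp:prop} the composed $\rho$ is a law invariant convex risk measure, hence Fatou continuous by Theorem \ref{thm:fatou}, so Theorem \ref{thm:dual2} already guarantees a representation of the form \eqref{eq:compdual2}; the real content is the identification of the penalty $\beta_\rho$.

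For item (i) I would start from the dual representation of $f$ in the form of Remark \ref{Rmk:dualf}, $f(R_X)=\sup_{\mu\in\mathcal{V}}\{\rho^\mu(X)-\gamma_f(\mu)\}$, and substitute the law invariant representation of $\rho^\mu$ from Proposition \ref{lmm:dualmu2}(i). Interchanging the suprema over $\mu$ and over $m$ and absorbing the inner $\mu$-infimum into the penalty yields $\beta_\rho(m)=\inf_{\mu\in\mathcal{V}}\{\beta_{\rho^\mu}(m)+\gamma_f(\mu)\}$; this is the identical algebra to Theorem \ref{thm:dualcomp}(i) with $(\mathcal{Q},\mathbb{Q},E_\mathbb{Q}[-X],\alpha)$ replaced by $(\mathcal{M},m,\int_{(0,1]}ES^\alpha(X)dm,\beta)$, and measurability of $i\mapsto\int_{(0,1]}ES^\alpha(X)dm^i$ supplied by Lemma \ref{Lmm:measure2}. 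Items (ii), (iii), (iv) then transcribe verbatim: when $f$ is positively homogeneous $\gamma_f$ is the $\{0,\infty\}$-indicator of $\mathcal{V}_f$, collapsing the infimum to $\inf_{\mu\in\mathcal{V}_f}\beta_{\rho^\mu}(m)$; when each $\rho^i$ is positively homogeneous each $\rho^\mu$ is coherent, so by Proposition \ref{lmm:dualmu2}(ii) $\beta_{\rho^\mu}$ is the indicator of $cl(\mathcal{M}_{\rho^\mu})$ and $\beta_\rho=\infty$ off $\cup_\mu cl(\mathcal{M}_{\rho^\mu})$; and combining the two gives \eqref{eqcomdual2b}, the passage to the closed convex hull being justified exactly as in Theorem \ref{thm:dualcomp}(iv).

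The genuinely new input is items (v) and (vi), which rest on co-monotonicity. The key preliminary observation is that a law invariant convex co-monotone risk measure is automatically positively homogeneous: for fixed $X$, co-monotonic additivity applied to the co-monotone pair $(\lambda X,\kappa X)$ with $\lambda,\kappa\geq0$ gives the Cauchy relation $\rho(\lambda X)+\rho(\kappa X)=\rho((\lambda+\kappa)X)$, which together with convexity (hence continuity) of $\lambda\mapsto\rho(\lambda X)$ forces $\rho(\lambda X)=\lambda\rho(X)$. Thus each $\rho^i$ is in fact coherent, Proposition \ref{lmm:dualmu2}(iii) applies, and each $\rho^\mu$ is co-monotone coherent with a single representing Kusuoka measure $m^\mu_c$; indeed, since co-monotonicity makes each $\mathcal{M}_{\rho^i}$ a singleton $\{m^i_c\}$ by Theorem \ref{thm:dual2}(iii), one has $m^\mu_c=\int_\mathcal{I}m^i_c\,d\mu$, well-defined by Lemma \ref{Lmm:measure2}. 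Consequently $\beta_{\rho^\mu}$ is the indicator of $\{m^\mu_c\}$, so for any $m\notin\cup_{\mu\in\mathcal{V}}\{m^\mu_c\}$ we have $\beta_{\rho^\mu}(m)=\infty$ for every $\mu$ and therefore $\beta_\rho(m)=\inf_\mu\{\infty+\gamma_f(\mu)\}=\infty$, which is item (v). Item (vi) combines this with item (ii): when $f$ is also positively homogeneous, $\beta_\rho(m)=\inf_{\mu\in\mathcal{V}_f}\beta_{\rho^\mu}(m)$ is $0$ exactly on $\{m^\mu_c:\mu\in\mathcal{V}_f\}$ and $\infty$ elsewhere, so $\rho$ is coherent (Proposition \ref{prp:prop}) with representation \eqref{eqcomdual2c}, the replacement of $\cup_{\mu\in\mathcal{V}_f}\{m^\mu_c\}$ by its closed convex hull $\mathcal{M}_{\rho,c}^{\mathcal{V}_f}$ leaving the supremum unchanged by the same convexity-and-closure argument as before.

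The step I expect to be the main obstacle is the well-definedness of $m^\mu_c$ in item (v): one must verify that the supremum in the co-monotone representation of $\rho^\mu$ is attained at a single Kusuoka measure independent of $X$, so that the $\argmax$ in the statement is unambiguous. This is precisely where the argument genuinely uses co-monotonic additivity (through Proposition \ref{lmm:dualmu2}(iii) and the singleton structure of $\mathcal{M}_{\rho^i}$) rather than mere coherence; without it one would only recover $\beta_\rho=\infty$ off the larger set $\cup_\mu cl(\mathcal{M}_{\rho^\mu})$ of item (iii), and the sharper conclusion $\mathcal{M}_{\rho,c}^{\mathcal{V}_f}$ would fail.
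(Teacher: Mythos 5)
Your handling of items (i)--(iv) is correct and is exactly the paper's (essentially one-line) argument: rerun the proof of Theorem \ref{thm:dualcomp} in the Kusuoka/ES dual, with Proposition \ref{lmm:dualmu2} in place of Theorem \ref{lmm:dualmu} and Lemma \ref{Lmm:measure2} supplying measurability. Your preliminary observation that Co-monotonic Additivity together with the monetary axioms forces Positive Homogeneity is also correct and genuinely useful, since the objects $\mathcal{M}_{\rho^\mu}$ and $m^\mu_c$ appearing in item (v) presuppose the coherent structure, a point the paper leaves implicit.

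The gap is in your derivation of items (v)--(vi). Theorem \ref{thm:dual2}(iii) does \emph{not} say that $\mathcal{M}_{\rho^i}$ is a singleton; it says that \emph{some} $m\in\mathcal{M}_{\rho^i}$ represents $\rho^i$ exactly. What is unique is the representing (spectral) measure, not the dual set, and consequently $\beta_{\rho^\mu}$ is \emph{not} the indicator of $\{m^\mu_c\}$. Concretely, take $\rho^i=ES^\alpha$ for every $i\in\mathcal{I}$ with fixed $\alpha\in(0,1)$, so that $\rho^\mu=ES^\alpha$ and $m^\mu_c=\delta_\alpha$ (the Dirac measure at $\alpha$ on $(0,1]$) for every $\mu\in\mathcal{V}$. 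Since $ES^1(X)=E[-X]\leq ES^\alpha(X)\leq 0$ for every $X\in\mathcal{A}_{ES^\alpha}$, with equality at $X=0$, we get $\beta^{min}_{\rho^i}(\delta_1)=\sup_{X\in\mathcal{A}_{ES^\alpha}}ES^1(X)=0$; choosing $m^i=\delta_1$ for all $i$ in \eqref{eq:pennu2} then gives $\beta_{\rho^\mu}(\delta_1)=0$, hence $\beta_\rho(\delta_1)=\inf_{\mu\in\mathcal{V}}\gamma_f(\mu)=0$ by normalization (Remark \ref{Rmk:dualf}), although $\delta_1\neq\delta_\alpha$. So the chain ``$m\notin\cup_\mu\{m^\mu_c\}$ implies $\beta_{\rho^\mu}(m)=\infty$ for every $\mu$'' breaks down, and in fact the conclusion of (v) is simply false for the particular penalty $\beta_\rho$ defined in item (i) --- no argument can rescue the literal reading.

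What the co-monotone hypothesis actually buys, and what the paper's citation of Proposition \ref{lmm:dualmu2}(iii) is meant to exploit, is that the supremum in the dual of $\rho^\mu$ is attained at $m^\mu_c$ \emph{simultaneously for all} $X$ (here your identification $m^\mu_c=\int_{\mathcal{I}}m^i_c\,d\mu$, which is correct, makes this explicit). One therefore substitutes the exact identity $\rho^\mu(X)=\int_{(0,1]}ES^\alpha(X)\,dm^\mu_c$ directly into $\rho(X)=\sup_{\mu\in\mathcal{V}}\{\rho^\mu(X)-\gamma_f(\mu)\}$, obtaining a representation of the form \eqref{eq:compdual2} with the \emph{new} penalty $\tilde{\beta}(m)=\inf\{\gamma_f(\mu)\colon\mu\in\mathcal{V},\,m^\mu_c=m\}$ (with $\inf\emptyset=\infty$), which is indeed infinite off $\cup_{\mu\in\mathcal{V}}\{m^\mu_c\}$ and yields \eqref{eqcomdual2c} under (ii). In other words, items (v)--(vi) assert that the penalty \emph{can be chosen} to be concentrated on $\cup_\mu\{m^\mu_c\}$, not that the item-(i) penalty has this property; your proof attempt targets the latter via a false intermediate claim, and this is the step that needs to be replaced.
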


\begin{proof}
	Direct from Theorem \ref{thm:dualcomp} and Proposition \ref{lmm:dualmu2}.
\end{proof}

\begin{Rmk}
	Since the comonotonicity of a pair $X,Y$ does not imply the same property for the pair $R_X,R_Y$, the only situation where $\rho$ is surely comonotone occurs, from Proposition \ref{prp:prop} and Lemma \ref{lmm:f}, when $\{f^\mu\}_{\mu\in\mathcal{V}}$. In this case, we have
	\begin{equation}
	\rho(X)=\rho^\mu(X)=\int_{(0,1]}ES^\alpha(X)dm^\mu_c,\:\forall\:X\in L^\infty.
	\end{equation}
	From \eqref{eq:spectralmu}, we have that $\phi^\mu(\alpha)=\int_{(\alpha,1]}\frac{1}{s}dm^\mu_c(s)$, where $\phi^\mu$ is as in Remark \ref{rmk:spectralmu}. 
	It would be possible also to investigate a situation where $f$ possesses representation in terms of ES over $\mathcal{X}$ as in Example \ref{ex:ut}, but we do not consider any base probability on $(\mathcal{I},\mathcal{G})$. However, note that those would be special cases of our framework. 
\end{Rmk}	

\section{Acceptance sets}\label{sec:accept}

\subsection{Properties}

In this section, we expose results regarding the acceptance sets of composed risk measures $\rho=f(\rho_{\mathcal{I}})$ based on the properties of both $\rho_{\mathcal{I}}$ and $f$. Of course, when $f$ possesses [M] we have \[\mathcal{A}_{\rho_{\mathcal{I}}}\supseteq\left\lbrace X\in L^\infty\colon \exists\:R\in f^{-1}(0)\:\text{s.t.}\:\rho_{\mathcal{I}}(X)\leq R\right\rbrace,\] where the inequality is in the point-wise order of $\mathcal{X}$. If in addition $f$ is injective function, then from normalization, $f(0)=0$, we have \[\left\lbrace X\in L^\infty\colon \exists\:R\in f^{-1}(0)\:\text{s.t.}\:\rho_{\mathcal{I}}(X)\leq R\right\rbrace=\left\lbrace X\in L^\infty\colon\rho_{\mathcal{I}}(X)\leq 0\right\rbrace=\mathcal{A}_{\rho^{WC}}.\]
However, since the point-wise order in $\mathcal{X}$ is not total, the set  $\mathcal{A}_{\rho_{\mathcal{I}}}$ can be much larger than the positions that lead to elements in the non-positive cone of $\mathcal{X}$. Thus, to provide a general characterization for $\mathcal{A}_{\rho_{\mathcal{I}}}$ is not trivial.

We begin by translating the role of financial properties preservation from section \ref{sec:prop} for acceptance sets.

\begin{Crl}\label{prp:prop2}
	Let $\rho_\mathcal{I}=\{\rho^i\colon L^\infty\rightarrow\mathbb{R},\:i\in\mathcal{I}\}$ be a collection of risk measures, $f\colon\mathcal{X}\rightarrow\mathbb{R}$, and $\rho\colon L^\infty\rightarrow\mathbb{R}$ a risk measure defined as $\rho(X)=f(\rho_\mathcal{I}(X))$. Then:
	\begin{enumerate}
		\item If $\rho_\mathcal{I}$ is composed of risk measures with [M] and $f$ possesses this same property, then $\mathcal{A}_\rho$ is monotone, i.e. $X\in\mathcal{A}_\rho$, $Y\in L^\infty$ and $Y\geq X$ implies in $Y\in\mathcal{A}_\rho$. In particular, $L^\infty_+\subseteq\mathcal{A}_\rho$.
		\item If $\rho_\mathcal{I}$ is composed of risk measures with [TI] and $f$ possesses this same property, then $\rho(X)=\inf\left\lbrace m\in\mathbb{R}\colon X+m\in\mathcal{A}_\rho\right\rbrace$.
		\item If the conditions in items (i) and (ii) are fulfilled, then $A_\rho$ is non-empty, closed with respect to the supremum norm, $\mathcal{A}_\rho\cap\{X\in L^\infty\colon X<0\}=\emptyset$, and $\inf\{m\in\mathbb{R}\colon m\in\mathcal{A}_\rho\}>-\infty$.
		\item If $\rho_\mathcal{I}$ is composed of risk measures with [C] and $f$ possesses this same property in pair with [M], then  $\mathcal{A}_\rho$ is a convex set.
		\item If $\rho_\mathcal{I}$ is composed of risk measures with [PH] and $f$ possesses this same property, then $\mathcal{A}_\rho$ is a cone.
		\item If $\rho_\mathcal{I}$ is composed of law invariant risk measures, then  $\mathcal{A}_\rho$ is law invariant in the sense of $X\in\mathcal{A}_\rho$ and $X\sim Y$ imply $Y\in\mathcal{A}_\rho$.
		\item If $\rho_\mathcal{I}$ is composed of comonotone risk measures and $f$ fulfills [A], then $\mathcal{A}_\rho$ is stable for sums of comonotonic pairs of random variables.
		\item If the conditions in items (i), (ii), and (iv) are fulfilled, $\rho_\mathcal{I}$ is composed of Fatou continuous risk measures and $f$ has [FC] and [M], then $\mathcal{A}_\rho$ is weak* closed.
	\end{enumerate}
\end{Crl}

\begin{proof}
	The claims are directly obtained by noticing that they are implications from Theorems \ref{Thm:accept} and \ref{the:dual} together to Proposition \ref{prp:prop}.
\end{proof}

\begin{Crl}\label{prp:prop4}
	Let $\rho_\mathcal{I}=\{\rho^i\colon L^\infty\rightarrow\mathbb{R},\:i\in\mathcal{I}\}$ be a collection of risk measures, $f\colon\mathcal{X}\rightarrow\mathbb{R}$, and $\rho\colon L^\infty\rightarrow\mathbb{R}$ a risk measure defined as $\rho(X)=f(\rho_\mathcal{I}(X))$. Then:
	\begin{enumerate}
		\item If $\rho_\mathcal{I}$ is composed of risk measures with [C] and $f$ possesses [M] and [QC], then $\mathcal{A}_\rho$ is a convex set.
		\item If  $\rho_\mathcal{I}$ is composed of risk measures with [M] and [CS] and $f$ possesses [M] and [TI], then \[\begin{cases*}
		\rho(X)\geq\inf\left\lbrace m\in\mathbb{R}\colon X+m\in\mathcal{A}_\rho\right\rbrace ,\:\text{if}\:\rho(X)\leq 0,\\
		\rho(X)\leq\inf\left\lbrace m\in\mathbb{R}\colon X+m\in\mathcal{A}_\rho\right\rbrace ,\:\text{if}\:\rho(X)\geq 0.	
		\end{cases*}\]
		\item  If $\rho_\mathcal{I}$ is composed of risk measures with [R] and $f$ has strict [M], then $\mathcal{A}_\rho\cap\{X\in L^\infty_-\colon\mathbb{P}(X<0)>0\}=\emptyset$.
		\item If $\rho_\mathcal{I}$ is composed of risk measures with [SI] and $f$ has [M] together to $f\geq f^{WC}$, then $X\in \mathcal{A}_\rho$ and $Y^-\leq X^-$ imply $Y\in\mathcal{A}_\rho,\:\forall\:X,Y\in L^\infty$. 
	\end{enumerate}
\end{Crl}

\begin{proof}
	Items (i), (iii) and (iv) are direct consequences from Theorem \ref{Thm:accept} and Proposition \ref{prp:prop3}. For item (ii), Proposition \ref{prp:prop3} implies $\rho$ has  [CS].  Note that it may be restated as $\rho(X-C)\leq\rho(X)+C,\:\forall\:C\in\mathbb{R}_+,\:\forall\:X\in L^\infty$ or $m\rightarrow\rho(X+m)+m$ be non-decreasing in $\mathbb{R}_+$ for any $X\in L^\infty$. With [M], Proposition 2.1 in \cite{Cerreia2011} assures $\rho$ is Lipschitz continuous. Fix $X\in L^\infty$. If $\rho(X)\leq0$, then $\rho(X+\rho(X))=\rho(X-(-\rho(X)))\leq\rho(X)-\rho(X)=0$. Thus, $X+\rho(X)\in\mathcal{A}_\rho$ and $\rho(X)\geq\inf\left\lbrace m\in\mathbb{R}\colon X+m\in\mathcal{A}_\rho\right\rbrace$. If $\rho(X)\geq0$, let $k=\inf\left\lbrace m\in\mathbb{R}\colon X+m\in\mathcal{A}_\rho\right\rbrace$. Thus, $k\geq 0$. For any $m\in\mathbb{R}$ with $X+m\in\mathcal{A}_\rho$, we obtain $\rho(X+k)+k\leq\rho(X+m)+m\leq m$. Then, it is true that $\rho(X+k)+k\leq\inf\left\lbrace m\in\mathbb{R}\colon X+m\in\mathcal{A}_\rho\right\rbrace=k$. Thus, $\rho(X+k)\leq 0$. Hence, $k\geq\rho(X+k)+k\geq\rho(X)$.
\end{proof}
Characterization of the acceptance sets can be made explicit for particular cases. 

\begin{Exm}
	We get the following examples for $\mathcal{A}_{f(\rho_{\mathcal{I}})}$:
	\begin{enumerate}
		\item For $f(R)=\sup_{i\in\mathcal{I}}R(i)$ we obtain $f(\rho_{\mathcal{I}})=\rho^{WC}$. In this case we get \[\mathcal{A}_{\rho^{WC}}=\{X\in L^\infty\colon \rho^i(X)\leq 0\:\forall\:i\in\mathcal{I}\}=\bigcap_{i\in\mathcal{I}}\mathcal{A}_{\rho^i}.\] 
		\item For $f(R)=\int_{\mathcal{I}}Rd\mu$ we obtain $f(\rho_{\mathcal{I}})=\rho^{\mu}$. We then have that \[\mathcal{A}_{\rho^\mu}=\left\lbrace X\in L^\infty\colon\int_{\{\rho^i(X)\leq 0\}}\rho^i(X)d\mu\leq-\int_{\{\rho^i(X)> 0\}}\rho^i(X)d\mu\right\rbrace .\] Note that from Assumption \ref{asp:measure} both $\{\rho^i(X)\leq 0\}$ and $\{\rho^i(X)> 0\}$ are in $\mathcal{G}$ for any $X\in L^\infty$.
		\item 	We have that any spectral (distortion) risk measure 
		$\rho^\phi(X)=\int_0^1VaR^\alpha(X)\phi(\alpha)d\alpha$ is a special case of $\rho^\mu$ by choosing $\rho^i(X)=VaR^i(X)$ and $\mu\ll\lambda$ with $\phi(i)=F^{-1}_{\frac{d\mu}{d\lambda}}(1-i)$. Since both $\alpha\rightarrow VaR^\alpha$ and $\alpha\rightarrow \phi(\alpha)$ are non-increasing, we can pick $\alpha_X\in[0,1]$ dependent of $X\in L^\infty$ such that $VaR^\alpha(X)\phi(\alpha)\geq0$ for any $\alpha<\alpha_X$ and $VaR^\alpha(X)\phi(\alpha)\leq0$ for any $\alpha>\alpha_X$. In this case we get \[\mathcal{A}_{\rho^\phi}=\left\lbrace X\in L^\infty\colon\int_{\alpha_X}^1VaR^\alpha\phi(\alpha)d\alpha\leq-\int^{\alpha_X}_0VaR^\alpha\phi(\alpha)d\alpha\right\rbrace .\] From the properties of VaR in Example \ref{Exm:meas}, Proposition \ref{prp:prop} and Corollary \ref{prp:prop2} this set is norm closed,  monotone, law invariant, a cone and stable for addition of comonotone pairs. If we also have that $\phi$ is non-increasing, the acceptance set is convex and weak* closed.
	\end{enumerate}
\end{Exm}

Nonetheless, direct general characterization of $\mathcal{A}_{\rho_{\mathcal{I}}}$ is not so easy from the complexity that arises from the combination. In the next subsection, we provide a general characterization for the case of convex risk measures.

\subsection{General result}

We now explore a more informative characterization for the acceptance sets of $f(\rho_{\mathcal{I}})$ for the case of convex risk measures from section \ref{sec:main}. In this sense, the next Theorem explores the role of $\mathcal{A}_{\rho^\mu}$ in such a framework.

\begin{Thm}\label{Thm:accept2}
	Let $\rho_\mathcal{I}=\{\rho^i\colon L^\infty\rightarrow\mathbb{R},\:i\in\mathcal{I}\}$ be a collection of Fatou continuous convex risk measures, $f\colon \mathcal{X}\rightarrow\mathbb{R}$ possessing [M], [TI], [C] and [FC], and  $\rho\colon L^\infty\rightarrow\mathbb{R}$ defined as $\rho(X)=f(\rho_\mathcal{I}(X))$. Then:
	\begin{enumerate}
		\item The acceptance set of $\rho$ is given by \begin{equation}\label{eq:accept}
		\mathcal{A}_\rho=\bigcap_{\mu\in\mathcal{V}}\{\mathcal{A}_{\rho^\mu}-\gamma_f(\mu)\}.
		\end{equation}
		\item if in addition to initial hypotheses $f$ fulfills [PH], then the acceptance set of $\rho$ is given by \begin{equation}\label{eq:accept2}
		\mathcal{A}_\rho=\bigcap_{\mu\in\mathcal{V}_f}\mathcal{A}_{\rho^\mu}.
		\end{equation}
	\end{enumerate}
\end{Thm}

\begin{proof}
	\begin{enumerate}
		\item We recall that the acceptance set of any Fatou continuous convex risk measure $\rho$ can be obtained through its penalty term as \begin{align*}
		\mathcal{A}_\rho&=\left\lbrace X\in L^\infty\colon \sup\limits_{ \mathbb{Q}\in\mathcal{Q}}\left\lbrace E_\mathbb{Q}[-X]-\alpha^{min}(\mathbb{Q})\right\rbrace \leq 0\right\rbrace \\
		&=\left\lbrace X\in L^\infty\colon E_\mathbb{Q}[-X]\leq\alpha^{min}(\mathbb{Q})\:\forall\:\mathbb{Q}\in\mathcal{Q}\right\rbrace .
		\end{align*}
		Note that this is equivalent to \[\mathcal{A}_\rho=\left\lbrace X\in L^\infty\colon E_\mathbb{Q}[-X]\leq\alpha(\mathbb{Q})\:\forall\:\mathbb{Q}\in\mathcal{Q}\right\rbrace\] for any, not necessarily minimal, penalty term $\alpha_\rho$ that represents $\rho$. Thus, from Theorem \ref{thm:dualcomp} we obtain
		\begin{align*}
		\mathcal{A}_\rho&=\left\lbrace X\in L^\infty\colon E_\mathbb{Q}[-X]\leq\inf\limits_{\mu\in\mathcal{V}}\left\lbrace \alpha_{\rho^\mu}(\mathbb{Q})+\gamma_f(\mu)\right\rbrace \:\forall\:\mathbb{Q}\in\mathcal{Q}\right\rbrace\\
		&=\left\lbrace X\in L^\infty\colon E_\mathbb{Q}[-X]\leq \alpha_{\rho^\mu}(\mathbb{Q})+\gamma_f(\mu) \:\forall\mu\in\mathcal{V}\:\forall\:\mathbb{Q}\in\mathcal{Q}\right\rbrace\\
		&=\bigcap_{\mu\in\mathcal{V}}\left\lbrace X\in L^\infty\colon E_\mathbb{Q}[-X]\leq \alpha_{\rho^\mu}(\mathbb{Q})+\gamma_f(\mu) \:\forall\:\mathbb{Q}\in\mathcal{Q}\right\rbrace\\
		&=\bigcap_{\mu\in\mathcal{V}}\left\lbrace X\in L^\infty\colon\rho^\mu(X)\leq\gamma_f(\mu)\right\rbrace=\bigcap_{\mu\in\mathcal{V}}\{\mathcal{A}_{\rho^\mu}-\gamma_f(\mu)\}.
		\end{align*}
		\item This is directly obtained from (i) since from Lemma \ref{lmm:f} in this case we have $\gamma_f(\mu)=0$ if $\mu\in\mathcal{V}_f$ and $\gamma_f(\mu)=\infty$ otherwise. Hence, we get \[	\mathcal{A}_\rho=\bigcap_{\mu\in\mathcal{V}}\{\mathcal{A}_{\rho^\mu}-\gamma_f(\mu)\}=\bigcap_{\mu\in\mathcal{V}_f}\{\mathcal{A}_{\rho^\mu}-\gamma_f(\mu)\}=\bigcap_{\mu\in\mathcal{V}_f}\mathcal{A}_{\rho^\mu}.\]
	\end{enumerate}
\end{proof}

\begin{Rmk}
	It becomes clear that the pivotal role player by $\rho^\mu$ for dual representations of section \ref{sec:main} is also present for acceptance sets. A financial interpretation is that in order for a position $X$ be acceptable for the combination $f(\rho_{\mathcal{I}})$ it must be acceptable for all possible weighting schemes $\mu$ adjusted by a correction, represented by $\gamma_f$. Without such adjustment, the set would be too restrictive. In fact, for $f$ with [PH], we can reduce the restriction to weight schemes over $\mathcal{V}_f$.
\end{Rmk}

\begin{Rmk}
	The results in the last Theorem agree with the four cases of Theorem \ref{thm:dualcomp}. More precisely, if $\rho_{\mathcal{I}}$ is composed as coherent risk measures, then \[\mathcal{A}_\rho=\left\lbrace X\in L^\infty\colon E_\mathbb{Q}[-X]\leq\inf\limits_{\mu\in\mathcal{V}}\left\lbrace \alpha_{\rho^\mu}(\mathbb{Q})+\gamma_f(\mu)\right\rbrace \:\forall\:\mathbb{Q}\in\cup_{\mu\in\mathcal{V}}cl(\mathcal{Q}_{\rho^\mu})\right\rbrace.\] Similar deductions as those for the general convex case lead to $\mathcal{A}_\rho=\bigcap_{\mu\in\mathcal{V}}\{\mathcal{A}_{\rho^\mu}-\gamma_f(\mu)\}$. Furthermore, when  $\rho_{\mathcal{I}}$ is composed as coherent risk measures and $f$ possesses [PH] we get \begin{align*}
	\mathcal{A}_\rho&=\left\lbrace X\in L^\infty\colon E_\mathbb{Q}[-X]\leq 0 \:\forall\:\mathbb{Q}\in clconv(\cup_{\mu\in\mathcal{V}_f}cl(\mathcal{Q}_{\rho^\mu}))\right\rbrace\\
	&=\left\lbrace X\in L^\infty\colon E_\mathbb{Q}[-X]\leq 0 \:\forall\:\mathbb{Q}\in \cup_{\mu\in\mathcal{V}_f}cl(\mathcal{Q}_{\rho^\mu})\right\rbrace\\
	&=\bigcap_{\mu\in\mathcal{V}_f}\left\lbrace X\in L^\infty\colon E_\mathbb{Q}[-X]\leq 0\:\forall\:\mathbb{Q}\in cl(\mathcal{Q}_{\rho^\mu})\right\rbrace\\
	&=\bigcap_{\mu\in\mathcal{V}_f}\left\lbrace X\in L^\infty\colon\rho^\mu(X)\leq0\right\rbrace=\bigcap_{\mu\in\mathcal{V}_f}\mathcal{A}_{\rho^\mu}.
	\end{align*}
\end{Rmk}

\begin{Rmk}
	As examples from the last Theorem, it is worth exploring the particular cases of $\rho^\mu$ and $\rho^{WC}$. For $\rho^\mu$, note that $f(R)=\int_{\mathcal{I}}Rd\mu$ leads to $\gamma_f$ assuming value $0$ in $\mu$ and $\infty$ in $\mathcal{V}\backslash\{\mu\}$. Thus, $\mathcal{V}_f=\{\mu\}$ and we indeed have $\mathcal{A}_{\rho^\mu}=\bigcap_{\nu\in\mathcal{V}_f}\mathcal{A}_{\rho^\nu}=\mathcal{A}_{\rho^\mu}$. Concerning to $\rho^{WC}$, $f(R)=\sup R$ leads to $\gamma_f=0$. Thus, we must have that $\mathcal{A}_{\rho^{WC}}=\bigcap_{i\in\mathcal{I}}\mathcal{A}_{\rho^i}=\bigcap_{\mu\in\mathcal{V}}\mathcal{A}_{\rho^\mu}$. In fact, it is straightforward that $\rho^{WC}(X)\leq 0$ if and only if $\int_{\mathcal{I}}\rho^i(X)d\mu\leq 0$ for any $\mu\in\mathcal{V}$, which corroborates to the claim.
\end{Rmk}

	\bibliography{ref}
	\bibliographystyle{elsarticle-harv}
\end{document}